\title[Shape Dynamics of $N$ Point Vortices on the Plane]{Symplectic Reduction and the Lie--Poisson Shape Dynamics\\
  of $N$ Point Vortices on the Plane}
\author{Tomoki Ohsawa}
\address{Department of Mathematical Sciences, The University of Texas at Dallas, 800 W Campbell Rd, Richardson, TX 75080-3021}
\email{tomoki@utdallas.edu}
\date{\today}
\keywords{Point vortices, Hamiltonian dynamics, Symplectic reduction, Lie-Poisson equation}
\subjclass[2010]{37J15,53D20,70H05,70H06,76B47}
\theoremstyle{plain}
\newtheorem{theorem}{Theorem}[section]
\newtheorem{corollary}[theorem]{Corollary}
\newtheorem{lemma}[theorem]{Lemma}
\newtheorem{proposition}[theorem]{Proposition}
\theoremstyle{definition}
\theoremstyle{remark}
\newtheorem{remark}[theorem]{Remark}
\def\od#1#2{\dfrac{d#1}{d#2}}
\def\pd#1#2{\dfrac{\partial #1}{\partial #2}}
\def\tpd#1#2{\partial #1/\partial #2}
\def\parentheses#1{\!\left(#1\right)}
\def\brackets#1{\!\left[#1\right]}
\def\tr{\mathop{\mathrm{tr}}\nolimits}
\def\norm#1{\left\|#1\right\|}
\def\DS{\displaystyle}
\def\R{\mathbb{R}}
\def\C{\mathbb{C}}
\def\N{\mathbb{N}}
\def\defeq{\mathrel{\mathop:}=}
\def\eqdef{=\mathrel{\mathop:}}
\def\setdef#1#2{ \left\{ #1 \ |\ #2 \right\} }
\def\ip#1#2{\left\langle#1,#2\right\rangle}
\renewcommand{\Re}{\operatorname{Re}}
\renewcommand{\Im}{\operatorname{Im}}
\def\rmi{{\rm i}}
\def\d{\mathbf{d}}
\def\ins#1{{\bf i}_{#1}}
\def\PB#1#2{\left\{#1,#2\right\}}
\newcommand\Ad{\operatorname{Ad}}
\newcommand\ad{\operatorname{ad}}
\def\SO{\mathsf{SO}}
\def\SE{\mathsf{SE}}
\def\U{\mathsf{U}}
\def\u{\mathfrak{u}}
\def\so{\mathfrak{so}}
\newenvironment{tbmatrix}{\left[\begin{smallmatrix}}{\end{smallmatrix}\right]}
\begin{document}

\footskip=.6in

\begin{abstract}
  We show that the symplectic reduction of the dynamics of $N$ point vortices on the plane by the special Euclidean group $\mathsf{SE}(2)$ yields a Lie--Poisson equation for relative configurations of the vortices.
  Specifically, we combine symplectic reduction by stages with a dual pair associated with the reduction by rotations to show that the $\mathsf{SE}(2)$-reduced space with non-zero angular impulse is a coadjoint orbit.
  This result complements some existing works by establishing a relationship between the symplectic/Hamiltonian structures of the original and reduced dynamics.
  We also find a family of Casimirs associated with the Lie--Poisson structure including some apparently new ones.
  We demonstrate through examples that one may exploit these Casimirs to show that some shape dynamics are periodic.
\end{abstract}

\maketitle

\section{Introduction}
\subsection{Dynamics of $N$ Point Vortices}
The dynamics of $N$ point vortices $\{ \mathbf{x}_{j} = (x_{j},y_{j}) \in \R^{2} \}_{j=1}^{N}$ on the plane $\R^{2}$ with non-zero circulations $\{ \Gamma_{j} \in \R\backslash\{0\} \}_{j=1}^{N}$ is governed by the system of equations
\begin{equation}
  \label{eq:xy-HamSys}
  \dot{x}_{j} = -\frac{1}{2\pi} \sum_{\substack{1\le k \le N\\ k \neq j}} \Gamma_{k} \frac{y_{j} - y_{k}}{\norm{\mathbf{x}_{j} - \mathbf{x}_{k}}^{2}},
  \qquad
  \dot{y}_{j} = \frac{1}{2\pi} \sum_{\substack{1\le k \le N\\ k \neq j}} \Gamma_{k} \frac{x_{j} - x_{k}}{\norm{\mathbf{x}_{j} - \mathbf{x}_{k}}^{2}}
\end{equation}
for $j \in \{1, \dots, N\}$; see, e.g., \citet[Section~2.1]{Ne2001} and \citet[Section~2.1]{ChMa1993}.
This system of equations may be formulated as a Hamiltonian system as follows:
Let us equip $\R^{2N} = \{ (\mathbf{x}_{1}, \dots, \mathbf{x}_{N}) \}$ with the symplectic form
\begin{equation}
  \label{eq:Omega-xy}
  \Omega \defeq \sum_{j=1}^{N}\Gamma_{j} \d{x}_{j} \wedge \d{y}_{j}
\end{equation}
and define the Hamiltonian $H$ as
\begin{equation*}
  H(\mathbf{x}_{1}, \dots, \mathbf{x}_{N})
  \defeq -\frac{1}{4\pi} \sum_{1\le j < k \le N} \Gamma_{j} \Gamma_{k} \ln\norm{\mathbf{x}_{j} - \mathbf{x}_{k}}^{2}.
\end{equation*}
We note in passing that, strictly speaking, one needs to remove those collision points, i.e., those with $\mathbf{x}_{j} = \mathbf{x}_{k}$ with $j \neq k$, from $\R^{2N}$.
The vector field $X_{H}$ on $\R^{2N}$ defined by the Hamiltonian system $\ins{X_{H}}\Omega = \d{H}$ yields the above system of equations.
A common and more succinct way of describing the system is to identify $\R^{2}$ with $\C$ via $(x_{j},y_{j}) \mapsto x_{j} + \rmi y_{j} \eqdef q_{j}$ and write the symplectic form on $\R^{2N} \cong \C^{N} = \{ \mathbf{q} = (q_{1}, \dots, q_{N}) \}$ as
\begin{equation*}
  \Omega = -\frac{1}{2} \sum_{j=1}^{N} \Gamma_{j} \Im(\d{q}_{j} \wedge \d{q}_{j}^{*}) = -\d\Theta
\end{equation*}
with
\begin{equation*}
  \Theta \defeq -\frac{1}{2} \sum_{j=1}^{N} \Gamma_{j} \Im(q_{j}^{*} \d{q}_{j}),
\end{equation*}
and the Hamiltonian as
\begin{equation}
  \label{eq:H}
  H(q_{1}, \dots, q_{N}) = -\frac{1}{4\pi} \sum_{1\le j < k \le N} \Gamma_{j} \Gamma_{k} \ln|q_{j} - q_{k}|^{2}.
\end{equation}
Then the system is written as
\begin{equation}
  \label{eq:N-point_vortices}
  \dot{q}_{j} = \frac{\rmi}{2\pi} \sum_{\substack{1\le k \le N\\ k \neq j}} \Gamma_{k} \frac{q_{j} - q_{k}}{|q_{j} - q_{k}|^{2}}
\end{equation}
for $j \in \{1, \dots, N\}$.

This system has $\SE(2) = \SO(2) \ltimes \R^{2}$-symmetry under the action
\begin{equation}
  \label{eq:SE2-action}
  \SE(2) \times \C^{N} \to \C^{N};
  \qquad
  ((e^{\rmi\theta}, a), \mathbf{q}) \mapsto e^{\rmi\theta}\mathbf{q} + a\mathbf{1},
\end{equation}
where we identified $\R^{2}$ with $\C$ and defined $\mathbf{1} \defeq (1, \dots, 1) \in \C^{N}$.

It is well known (see, e.g., \citet[Equation~(2.1.5) on p.~69]{Ne2001}) that one may derive a set of equations for the inter-vortex distances $l_{ij} \defeq |q_{i} - q_{j}|$ of the point vortices; they are often referred to as the \textit{equations of relative motion} or the \textit{shape dynamics}; see \eqref{eq:relative_motion} below.
From the geometric point of view, this corresponds to the reduction of the dynamics by the above $\SE(2)$-symmetry:
This symmetry is essentially due to the uniformity of the ambient space, and hence ``dividing'' the dynamics by this symmetry results in the shape dynamics.
Such a reduction by symmetry---called symplectic or Hamiltonian reduction---is one of the main topics of the geometric approach to Hamiltonian dynamics; see, e.g., \citet{AbMa1978}, \citet{MaRa1999}, \citet{MaMiOrPeRa2007}, and references therein.
The use of shape space/dynamics is particularly popular in the $N$-body problem of classical mechanics; see, e.g., \citet{Iw1987}, \citet{Mo2015}, and references therein.

\subsection{Motivating Examples}
\label{ssec:motivating_examples}
We would like to show some motivating examples before discussing the main result of the paper.
The first example is the case with $N = 3$ and $\Gamma_{1} + \Gamma_{2} + \Gamma_{3} \neq 0$.
Using the inter-vortex distance $l_{ij} \defeq |q_{i} - q_{j}|$ between the vortices $i$ and $j$ and the signed area
\begin{equation*}
  A \defeq -\frac{1}{2}\Im( (q_{1} - q_{3})(q_{2}^{*} - q_{3}^{*})
  = \frac{1}{2}
  \begin{vmatrix}
    x_{1} - x_{3} & x_{2} - x_{3}\\
    y_{1} - x_{3} & y_{2} - x_{3}
  \end{vmatrix}.
\end{equation*}
of the triangle formed by the point vortices, we can derive, using \eqref{eq:N-point_vortices}, the \textit{equations of relative motion} mentioned above (see \citet[Equation~(2.1.5) on p.~69]{Ne2001}, \citet[Eqs.~(22) and (25)]{Ar2007}, and also references therein):
\begin{equation}
  \label{eq:relative_motion}
  \begin{array}{c}
    \DS \od{}{t}l_{jk}^{2} = \frac{2\Gamma_{i}}{\pi} \parentheses{ \frac{1}{l_{ki}^{2}} - \frac{1}{l_{ij}^{2}} } A
    \quad
    \text{for}
    \quad
    (i,j,k) \in \mathcal{Z}_{3},
    \\
    \DS \dot{A} = \frac{1}{8\pi} \sum_{(i,j,k) \in \mathcal{Z}_{3}} (\Gamma_{j}+\Gamma_{k}) \frac{l_{ki}^{2} - l_{ij}^{2}}{l_{jk}^{2}},
  \end{array}
\end{equation}
where $\mathcal{Z}_{3} \defeq \{ (1,2,3), (2,3,1), (3,1,2) \}$.
These equations govern the shape dynamics of the point vortices.

We will reformulate this set of equations as a Lie--Poisson equation on the dual of a certain Lie algebra, just as in \citet{BoPa1998} and \citet{BoBoMa1999} as a result of the $\SE(2)$-reduction.
The Lie--Poisson formulation helps us find a family of Casimirs (conserved quantities) including the following apparently new one:
\begin{equation}
  \label{eq:C_2}
  C_{2} = \sum_{(i,j,k) \in \mathcal{Z}_{3}} \parentheses{ \frac{l_{jk}^{4}}{\Gamma_{i}^{2}} + \frac{(l_{ij}^{2} - l_{jk}^{2} + l_{ki}^{2})^{2}}{2\Gamma_{j}\Gamma_{k}} } + 8\frac{\Gamma_{1} + \Gamma_{2} + \Gamma_{3}}{\Gamma_{1} \Gamma_{2} \Gamma_{3}} A^{2}.
\end{equation}

Figure~\ref{fig:3PVs} shows a numerical solution of \eqref{eq:xy-HamSys} along with the triangle connecting the initial positions of the three vortices with
\begin{equation}
  \label{eq:3PVs}
  (q_{1}(0), q_{2}(0), q_{3}(0)) = \parentheses{ 1 - 2\rmi, 2 + 4\rmi, -\frac{5}{3} - 2\rmi },
  \quad
  (\Gamma_{1}, \Gamma_{2}, \Gamma_{3}) = (5, 10, 15).
\end{equation}
This initial condition satisfies $\sum_{j=1}^{3} \Gamma_{j} q_{j}(0) = 0$, which is a conserved quantity called \textit{linear impulse} of the system as we shall see below in \eqref{eq:I}.
\begin{figure}
  \centering
  \includegraphics[width=.4\linewidth]{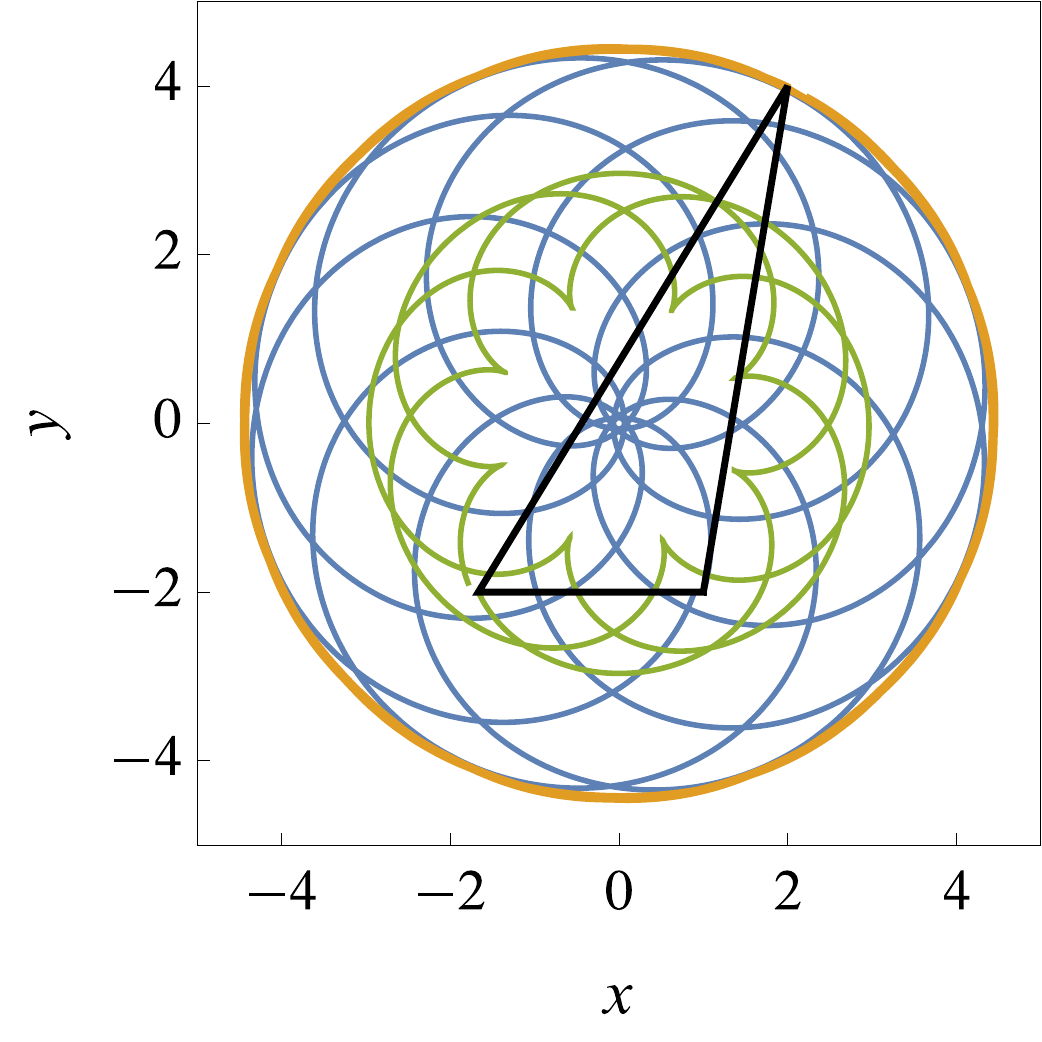}
  \captionsetup{width=0.95\textwidth}
  \caption{
    Numerical solution of \eqref{eq:xy-HamSys} with $N = 3$ and \eqref{eq:3PVs}.
    The black triangle is the shape formed by the \textit{initial} positions of the point vortices.
    Blue is the trajectory of the first vortex, orange the second, and green the third.
    The trajectories are \textit{not} exactly periodic.
  }
  \label{fig:3PVs}
\end{figure}
Figure~\ref{fig:Snapshots-3PVs} shows snapshots of the same solution along with the triangle connecting the positions of the point vortices.
\begin{figure}
  \centering
  \includegraphics[width=0.9\linewidth]{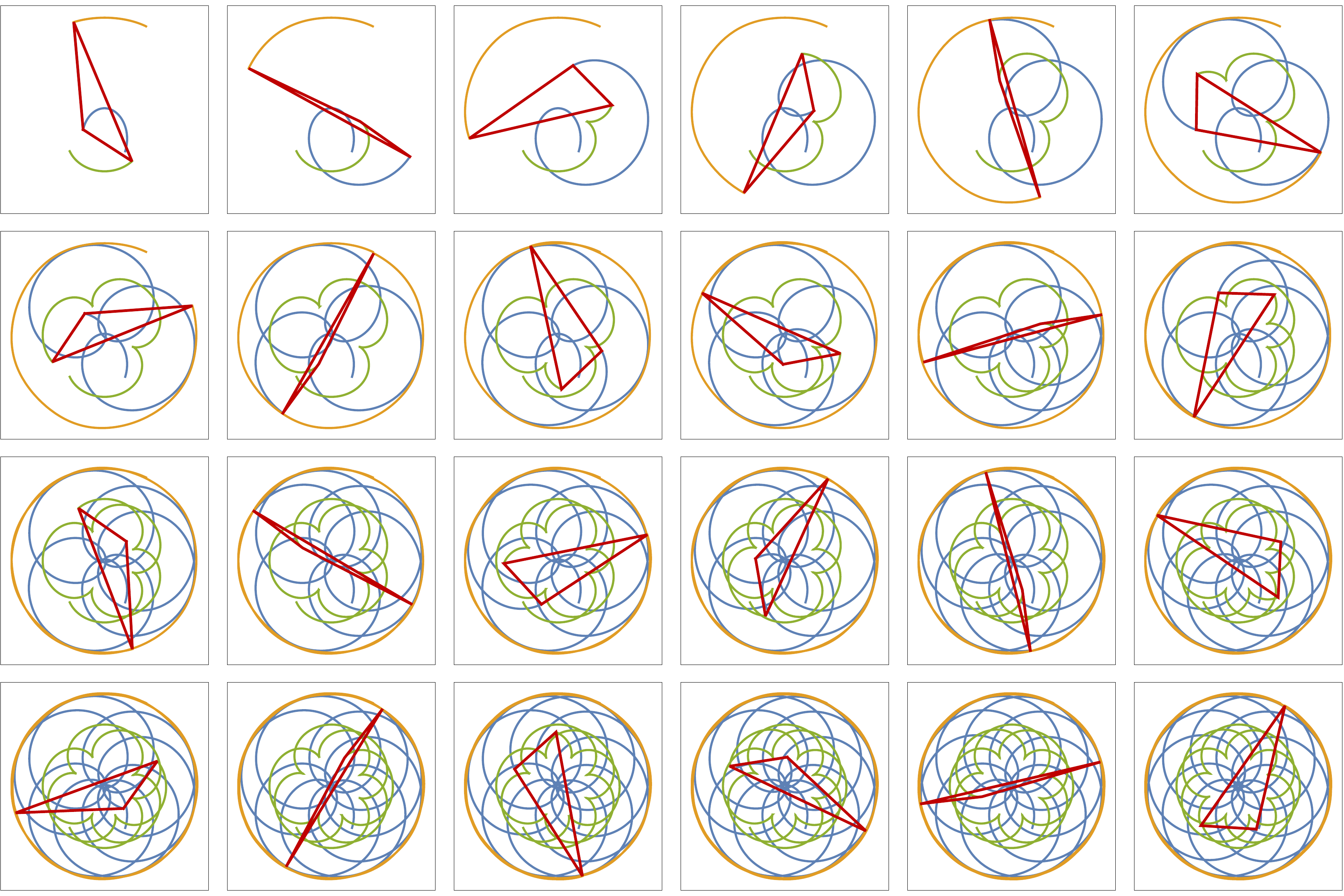}
  \captionsetup{width=0.95\textwidth}
  \caption{
    Snapshots of numerical solution from Fig.~\ref{fig:3PVs} with the triangle connecting the point vortices.
    The triangles in each column seem to be congruent, suggesting that the triangle changes its shape periodically.
  }
  \label{fig:Snapshots-3PVs}
\end{figure}
The triangle seems to change its shape periodically.
We will show in Section~\ref{ssec:N=3} that the shape dynamics is indeed periodic exploiting the Casimir~\eqref{eq:C_2}.
Note however that the \textit{trajectories} of the vortices are actually \textit{not} periodic:
These trajectories shown in Fig.~\ref{fig:3PVs} do not exactly come back to the initial positions. 

The other motivating example is the case with $N = 4$ and
\begin{equation}
  \label{eq:4PVs}
  (q_{1}(0), q_{2}(0), q_{3}(0), q_{4}(0)) = \parentheses{ 1 - 2\rmi, 2 + 4\rmi, 5, \frac{5}{8}(5 - \rmi) },
  \quad
  (\Gamma_{1}, \Gamma_{2}, \Gamma_{3}, \Gamma_{4}) = (5, 10, -7, -8)
\end{equation}
so that that $\sum_{j=1}^{4} \Gamma_{j} = 0$ as well as $\sum_{j=1}^{4} \Gamma_{j} q_{j}(0) = 0$.
Figure~\ref{fig:Snapshots-4PVs} shows a numerical solution of \eqref{eq:xy-HamSys} along with snapshots of the quadrangle connecting the positions of the point vortices.
The two neighboring quadrangles slightly left from the center at the bottom are the initial and terminal ones.
Note that the terminal quadrangle appears to be congruent to the initial shape, but is located at a slightly different position, again indicating that \textit{the shape dynamics may be periodic but the trajectories of the vortices are not}.
\begin{figure}
  \centering
  \includegraphics[width=0.4\linewidth]{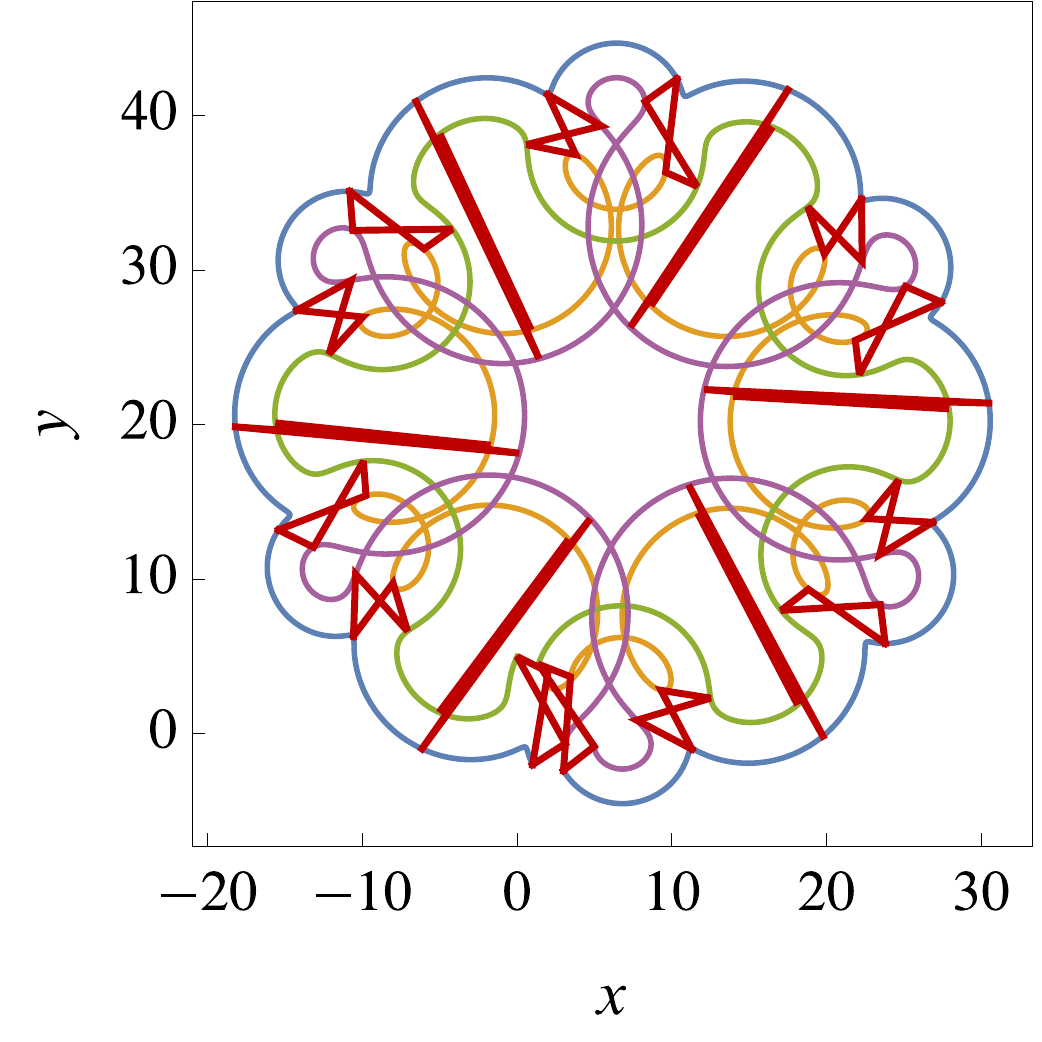}
  \captionsetup{width=0.95\textwidth}
  \caption{
    Numerical solution of \eqref{eq:xy-HamSys} with $N = 4$ and \eqref{eq:4PVs}.
    Blue is the trajectory of the first vortex, orange the second, green the third, and violet the fourth.
    It also shows snapshots of the quadrangle connecting the vortices in the numerical order; the shape dynamics again seems periodic.
  }
  \label{fig:Snapshots-4PVs}
\end{figure}

\subsection{Main Results}
We perform $\SE(2)$-reduction of the Hamiltonian dynamics of $N$ point vortices with non-zero angular impulse and show that the resulting dynamics can be written as a Lie--Poisson equation in a coadjoint orbit.
The main goal of this paper is to show that the $\SE(2)$-reduction naturally gives rise to the Lie--Poisson equation.

That one can write the reduced/shape dynamics of $N$ point vortices as a Lie--Poisson equation is not new.
\citet{BoPa1998} found the Lie--Poisson bracket for the reduced dynamics in a rather direct manner, and \citet{BoBoMa1999} gave a Lie-algebraic interpretation of the result by defining a so-called vortex algebra, and showed that it is isomorphic to the indefinite unitary algebra $\u(n_{1},n_{2})$ for some $n_{1}, n_{2} \in \{0, \dots, N-1\}$ such that $n_{1} + n_{2} = N-1$, depending on the signs of the circulations $\{ \Gamma_{j} \}_{j=1}^{N}$.
More recently, \citet{He2016} (see also \citet{HeSh2018}) showed that the reduced dynamics of three point vortices may be written as a Lie--Poisson equation on $\u(2)^{*}$ with the standard Lie--Poisson bracket by constructing a set of covectors satisfying the Pauli commutation relations.

The main contributions of this paper are the following:
\textit{
  (i)~We identify the Lie--Poisson structure as the natural symplectic structure on the reduced space by performing symplectic reduction by the $\SE(2)$-symmetry, thereby establishing a clear connection between the original symplectic structure~\eqref{eq:Omega-xy} with the Lie--Poisson structure (Theorem~\ref{thm:rotational_reduction}).
  The resulting Lie--Poisson equation yields the equations~\eqref{eq:relative_motion} of relative motion.
  (ii)~The Lie--Poisson structure naturally gives rise to Casimirs that may provide additional conserved quantities (Corollary~\ref{cor:reduced_dynamics}).
  Some of the Casimirs are apparently new, while others are well-known conserved quantities.
  We exploit these Casimirs to show that the shape dynamics from the above examples are in fact periodic.
}

\subsection{Outline}
\label{ssec:outline}
Particularly, we perform the $\SE(2)$-reduction by stages by first performing the reduction by $\R^{2}$ (see Section~\ref{sec:ReductionByTranslation}), and then by $\SO(2)$ (see Section~\ref{sec:ReductionByRotation}).
We note that \citet{BoBoMa1999} seem to work the other way around, i.e., first by rotations and then by translations, although it is not particularly clear how one can perform the $\R^{2}$-reduction of the $\SO(2)$-reduced space, nor how the symplectic structures are related to each other.
We stick to the former approach because that is the procedure justified by the semidirect product reduction (see, e.g., \citet[Theorem~4.2.2 on p.~122]{MaMiOrPeRa2007}).

Our work elucidates how the original symplectic structure $\Omega$ gives rise to a symplectic structure $\Omega_{Z}$ or $\Omega_{Z_{0}}$ (Propositions~\ref{prop:Omega_Z} and \ref{prop:Omega_Z_0}) on the $\R^{2}$-reduced space, and also in turn, $\Omega_{Z}$ or $\Omega_{Z_{0}}$ gives rise to the Lie--Poisson structure as a result of the $\SO(2)$-reduction if the angular impulse is non-zero (Theorem~\ref{thm:rotational_reduction}).
As we shall see in Section~\ref{sec:ReductionByTranslation}, the two symplectic structures $\Omega_{Z}$ and $\Omega_{Z_{0}}$ on the $\R^{2}$-reduced space correspond to those cases where the total circulation $\Gamma \defeq \sum_{j=1}^{N} \Gamma_{j}$ is non-zero and zero, respectively.
These two cases result in slightly different geometries and hence require separate treatments.
Nevertheless, the resulting symplectic structures $\Omega_{Z}$ and $\Omega_{Z_{0}}$ have similar structures, and hence the $\SO(2)$-reduction to follow works the same way.

As an aside, we note that the initial inspiration came from the work of \citet{Mo2015} on the reduction of the three-body problem (of celestial mechanics \textit{not} of point vortices).
The map $\Phi$ defined in~(25) (or $\pi^{\rm rot}$ defined in (31)) in \cite{Mo2015} used for reduction by rotational symmetry is a momentum map if one thinks of the \textit{configuration space} $\R^{2} \cong \C$---not its cotangent bundle---as a symplectic vector space in the standard manner.
While this symplectic structure on the configuration space has little significance in celestial mechanics, it is an essential ingredient in point vortex dynamics as its Hamiltonian formulation employs a variant~\eqref{eq:Omega-xy} of this symplectic structure.
The corresponding momentum map in our context constitutes one leg of the dual pair we will exploit in this paper; see Section~\ref{ssec:reduction_by_rotation}.

\section{Reduction by Translational Symmetry}
\label{sec:ReductionByTranslation}
The first stage of the $\SE(2)$-reduction by stages is the reduction by the translational symmetry.
As mentioned above, we need slightly different treatments depending on whether the total circulation $\Gamma \defeq \sum_{j=1}^{N} \Gamma_{j}$ is zero or not.

\subsection{Translational Symmetry and Momentum Map}
Consider the translational part of the $\SE(2)$-action \eqref{eq:SE2-action}, i.e., $\C \cong \R^{2}$-action on $\C^{N}$ as follows:
\begin{equation*}
  \C \times \C^{N} \to \C^{N};
  \qquad
  (a, \mathbf{q} \defeq (q_{1}, \dots , q_{N})) \mapsto \mathbf{q} + a\mathbf{1}.
\end{equation*}
The corresponding infinitesimal generator for $\alpha \in \C$ is then written as
\begin{equation*}
  \alpha_{\C^{N}}(\mathbf{q}) = \sum_{j=1}^{N} \parentheses{ \alpha\,\pd{}{q_{j}} + \alpha^{*}\pd{}{q_{j}^{*}} },
\end{equation*}
Then one sees that 
\begin{equation*}
  \ins{\alpha_{\C^{N}}}\Omega = \d{\mathcal{I}^{\alpha}}
\end{equation*}
with
\begin{align*}
  \mathcal{I}^{\alpha}(\mathbf{q})
  &\defeq -\frac{\rmi}{2} \sum_{j=1}^{N} \Gamma_{j} (\alpha^{*}q_{j} - \alpha q_{j}^{*}) \\
  &= \frac{1}{2}\brackets{ \parentheses{ -\rmi \sum_{j=1}^{N} \Gamma_{j} q_{j} }^{*} \alpha + \alpha^{*} \parentheses{ -\rmi \sum_{j=1}^{N} \Gamma_{j} q_{j}} } \\
  &= \ip{ -\rmi \sum_{j=1}^{N} \Gamma_{j} q_{j} }{\alpha}_{\C},
\end{align*}
where we defined an inner product on $\C$ as $\ip{\alpha}{\beta}_{\C} \defeq \Re(\alpha^{*}\beta)$.
Hence we have $\mathcal{I}^{\alpha}(\mathbf{q}) = \ip{ \mathbf{I}(\mathbf{q}) }{\alpha}$ with the momentum map $\mathbf{I}\colon \C^{N} \to \C^{*} \cong \C$ defined by
\begin{equation}
  \label{eq:I}
  \mathbf{I}(\mathbf{q}) \defeq -\rmi \sum_{j=1}^{N} \Gamma_{j} q_{j}.
\end{equation}
This is essentially the so-called \textit{linear impulse}; see, e.g., \citet[Section~2.1]{Ne2001} and \citet{Ar2007}.
By Noether's Theorem (see, e.g., \citet[Theorem~11.4.1]{MaRa1999}), this is a conserved quantity of the system~\eqref{eq:N-point_vortices}.

The above momentum map is \textit{not} equivariant except for a special case:
\begin{lemma}
  \label{lem:I-equivariance}
  The momentum map $\mathbf{I}$ is equivariant if and only if the total circulation
  \begin{equation*}
    \Gamma \defeq \sum_{j=1}^{N}\Gamma_{j}
  \end{equation*}
  vanishes.
\end{lemma}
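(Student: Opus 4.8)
The plan is to exploit the fact that the translational group acting here is $\C \cong \R^{2}$, which is abelian, so that its coadjoint action on $\C^{*} \cong \C$ is trivial. Consequently the equivariance condition $\mathbf{I} \circ \Phi_{a} = \Ad^{*}_{-a}\,\mathbf{I}$ collapses to the plain invariance statement $\mathbf{I}(\mathbf{q} + a\mathbf{1}) = \mathbf{I}(\mathbf{q})$ for all $a \in \C$ and $\mathbf{q} \in \C^{N}$, where $\Phi_{a}$ denotes the translation $\mathbf{q} \mapsto \mathbf{q} + a\mathbf{1}$. Establishing this reduction is the only conceptual point; the rest is a one-line computation.

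First I would evaluate $\mathbf{I}$ on a translated configuration directly from the definition~\eqref{eq:I}. Using linearity of the sum,
\begin{equation*}
  \mathbf{I}(\mathbf{q} + a\mathbf{1}) = -\rmi \sum_{j=1}^{N} \Gamma_{j}(q_{j} + a) = -\rmi\sum_{j=1}^{N}\Gamma_{j}q_{j} - \rmi a \sum_{j=1}^{N}\Gamma_{j} = \mathbf{I}(\mathbf{q}) - \rmi\,\Gamma\,a.
\end{equation*}
This displays the non-equivariance one-cocycle $\sigma(a) \defeq \mathbf{I}(\mathbf{q} + a\mathbf{1}) - \mathbf{I}(\mathbf{q}) = -\rmi\,\Gamma\,a$, which, as expected for a momentum map, turns out to be independent of the base point $\mathbf{q}$.

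From here both implications are immediate. If $\Gamma = 0$ then $\sigma \equiv 0$, so $\mathbf{I}$ is invariant and hence equivariant. Conversely, if $\mathbf{I}$ is equivariant then $\sigma(a) = -\rmi\,\Gamma\,a$ must vanish for every $a \in \C$; choosing $a = 1$ (or any $a \neq 0$) forces $\Gamma = 0$. I do not anticipate any genuine obstacle here: the computation is elementary, and the only thing to be careful about is the triviality of the coadjoint action in the abelian case, which is what turns equivariance into invariance and makes the cocycle $\sigma$ the sole obstruction.
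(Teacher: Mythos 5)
Your proposal is correct and follows essentially the same route as the paper: both reduce equivariance to invariance via the triviality of the coadjoint action for the abelian group $\C$, and then compute $\mathbf{I}(\mathbf{q}+a\mathbf{1}) = \mathbf{I}(\mathbf{q}) - \rmi\,\Gamma\,a$ to read off the condition $\Gamma = 0$. Your explicit identification of the cocycle $\sigma$ is a nice touch that the paper only introduces later, in the proof of Proposition~\ref{prop:translational_reduction}.
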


\begin{proof}
  Since $\mathbb{C}$ is abelian, the coadjoint action is trivial; hence equivariance would be $\mathbf{I}(\mathbf{q} + a\mathbf{1}) = \mathbf{I}(\mathbf{q})$ for any $a \in \C$.
  However, it is straightforward to see that, for any $a \in \C$,
  \begin{equation*}
    \mathbf{I}(\mathbf{q} + a\mathbf{1}) = \mathbf{I}(\mathbf{q}) - \rmi \Gamma a. \qedhere
  \end{equation*}
\end{proof}

\subsection{Reduction by Translational Symmetry}
Let $c \in \C$ be arbitrary and consider the level set
\begin{equation}
  \label{eq:I-level_set}
  \mathbf{I}^{-1}(-\rmi c) = \setdef{ (q_{1}, \dots, q_{N}) \in \C^{N} }{ \sum_{j=1}^{N} \Gamma_{j}q_{j} = c },
\end{equation}
which defines an affine subspace of $\C^{N}$.
It has different symplectic-geometric properties depending on the value of the total circulation $\Gamma$:
\begin{lemma}
  The affine subspace $\mathbf{I}^{-1}(-\rmi c) \subset \C^{N}$ is symplectic if $\Gamma \neq 0$ whereas it is coisotropic if $\Gamma = 0$.
\end{lemma}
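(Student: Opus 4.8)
The plan is to pass from the affine subspace to its (constant) tangent space and test the nondegeneracy of $\Omega$ there. Because $\mathbf{I}$ is affine-linear, at every point $\mathbf{q} \in \mathbf{I}^{-1}(-\rmi c)$ the tangent space is the one fixed linear subspace
\begin{equation*}
  W \defeq \ker \d\mathbf{I} = \setdef{ \mathbf{v} = (v_{1}, \dots, v_{N}) \in \C^{N} }{ \sum_{j=1}^{N} \Gamma_{j} v_{j} = 0 }.
\end{equation*}
By definition the affine subspace is symplectic precisely when $\Omega|_{W}$ is nondegenerate, i.e.\ $W \cap W^{\Omega} = \{0\}$, and coisotropic precisely when $W^{\Omega} \subseteq W$, where $W^{\Omega}$ denotes the $\Omega$-orthogonal complement of $W$. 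So the entire statement reduces to identifying $W^{\Omega}$.

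First I would compute $W^{\Omega}$ directly. Evaluating the symplectic form on vectors gives $\Omega(\mathbf{w}, \mathbf{v}) = \sum_{j} \Gamma_{j} \Im(w_{j}^{*} v_{j})$, and one checks that $\alpha\mathbf{1} \in W^{\Omega}$ for every $\alpha \in \C$, since $\sum_{j} \Gamma_{j} \Im(\alpha^{*} v_{j}) = \Im\parentheses{ \alpha^{*} \sum_{j} \Gamma_{j} v_{j} } = 0$ on $W$. This shows $\C\mathbf{1} = \setdef{ \alpha\mathbf{1} }{ \alpha \in \C } \subseteq W^{\Omega}$; a dimension count (see below) then forces equality $W^{\Omega} = \C\mathbf{1}$. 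Equivalently, one may invoke the standard reduction identity $(\ker \d\mathbf{I}_{\mathbf{q}})^{\Omega} = T_{\mathbf{q}}(\C \cdot \mathbf{q})$, and observe that the translation orbit has tangent space spanned by the infinitesimal generators $\alpha_{\C^{N}}(\mathbf{q})$, which is exactly the diagonal line $\C\mathbf{1}$.

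With $W^{\Omega} = \C\mathbf{1}$ in hand, the dichotomy falls out of a single computation: for $\alpha \in \C$, the vector $\alpha\mathbf{1}$ lies in $W$ if and only if $\sum_{j} \Gamma_{j}\alpha = \alpha\Gamma = 0$. If $\Gamma \neq 0$ this forces $\alpha = 0$, so $W \cap W^{\Omega} = \{0\}$ and the subspace is symplectic. If $\Gamma = 0$, then $\alpha\mathbf{1} \in W$ for every $\alpha$, so $W^{\Omega} = \C\mathbf{1} \subseteq W$ and the subspace is coisotropic. This is precisely the geometric shadow of the non-equivariance recorded in Lemma~\ref{lem:I-equivariance}, where $\Gamma$ appears as the obstruction.

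I expect the only delicate point to be the bookkeeping that pins down $W^{\Omega}$ exactly: one must confirm that the orbit direction $\C\mathbf{1}$ is genuinely two real-dimensional (that is, $\mathbf{1} \neq 0$, so $\d\mathbf{I}$ has full rank $2$ and $\dim_{\R} W = 2N - 2$, whence $\dim_{\R} W^{\Omega} = 2$) before upgrading the inclusion $\C\mathbf{1} \subseteq W^{\Omega}$ to an equality. Everything after that step is the elementary observation $\alpha\Gamma = 0$.
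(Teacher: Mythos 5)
Your proof is correct and takes essentially the same approach as the paper: both arguments identify the symplectic orthogonal complement of the tangent space $W = \ker\d\mathbf{I}$ as the diagonal $\C\mathbf{1}$ and then observe that $W \cap \C\mathbf{1}$ is $\{0\}$ or all of $\C\mathbf{1}$ according to whether $\Gamma \neq 0$ or $\Gamma = 0$. The only cosmetic difference is that the paper pins down $W^{\Omega} = \C\mathbf{1}$ by a direct computation (eliminating $v_{N}$ and using the arbitrariness of $v_{1},\dots,v_{N-1}$), whereas you verify the inclusion $\C\mathbf{1} \subseteq W^{\Omega}$ and upgrade it to equality by the dimension count $\dim_{\R} W^{\Omega} = 2N - \dim_{\R} W = 2$, which is equally valid since $\Omega$ is nondegenerate on $\C^{N}$ (all $\Gamma_{j} \neq 0$).
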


\begin{proof}
  Let us write $A \defeq \mathbf{I}^{-1}(-\rmi c)$ for short and find the symplectic orthogonal complement $(TA)^{\Omega}$ of the tangent space $TA$ of $A$.
  Let $\mathbf{q} \in A$ be arbitrary and $\mathbf{v} = (v_{1}, \dots, v_{N}) \in \C^{N}$ be an arbitrary element in $T_{\mathbf{q}}A$ by identifying $T_{\mathbf{q}}A$ with a subspace of $\C^{N}$ in a natural manner for notational simplicity.
  Then we have $\Gamma_{N} v_{N} = -\sum_{j=1}^{N-1}\Gamma_{j}v_{j}$.
  For an arbitrary $\mathbf{w} = (w_{1}, \dots, w_{N}) \in T_{\mathbf{q}}\C^{N}$, we have
  \begin{align*}
    \Omega(v, w)
    &= -\sum_{j=1}^{N} \frac{\Gamma_{j}}{2} \Im\parentheses{ v_{j} w_{j}^{*} - v_{j}^{*} w_{j} } \\
    &= \sum_{j=1}^{N} \Gamma_{j} \Im\parentheses{ v_{j}^{*} w_{j} } \\
    &= \Im\parentheses{ \sum_{j=1}^{N-1} \Gamma_{j} v_{j}^{*} (w_{j} - w_{N}) }.
  \end{align*}
  Since $v_{1}, \dots, v_{N-1} \in \C$ are arbitrary, it follows that
  \begin{equation*}
    (T_{\mathbf{q}}A)^{\Omega} = \setdef{ \mathbf{w} \in \C^{N} }{ w_{1} = \dots = w_{N} } = \C\mathbf{1},
  \end{equation*}
  where we defined
  \begin{equation*}
    \C\mathbf{1} \defeq \setdef{ a\mathbf{1} \in \C^{N} }{ a \in \C }.
  \end{equation*}
 Hence we see that
  \begin{equation*}
    T_{\mathbf{q}}A \cap (T_{\mathbf{q}}A)^{\Omega}
    = \setdef{ a\mathbf{1} \in \C^{N} }{ a \in \C,\, a\Gamma = 0 } \\
    =
    \begin{cases}
      \{0\} & \Gamma \neq 0, \\
      \C\mathbf{1} = (T_{\mathbf{q}}A)^{\Omega} & \Gamma = 0.
    \end{cases}
  \end{equation*}
  Therefore, if $\Gamma \neq 0$ then $A$ is symplectic, whereas if $\Gamma = 0$ then $(T_{\mathbf{q}}A)^{\Omega} \subset T_{\mathbf{q}}A$ for any $\mathbf{q} \in A$, and so $A$ is coisotropic.
\end{proof}

As a result, we obtain the reduced space as follows:
\begin{proposition}[Reduction by translational symmetry]\hfill
  \label{prop:translational_reduction}
  \begin{enumerate}[(i)]
  \item If $\Gamma \neq 0$, the reduced space by the translational symmetry is $\mathbf{I}^{-1}(-\rmi c)$ itself for any $c \in \C$; the affine subspace $\mathbf{I}^{-1}(-\rmi c)$ in turn may be identified with the subspace $\mathbf{I}^{-1}(0) \cong \C^{N-1}$.
  \item If $\Gamma = 0$, the reduced space is $\mathbf{I}^{-1}(-\rmi c)/\C$ and may be identified with $\mathbf{I}^{-1}(0)/\C \cong \C^{N-2}$.
  \end{enumerate}
\end{proposition}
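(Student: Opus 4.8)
The plan is to apply the Marsden--Weinstein reduction theorem in its form for non-equivariant momentum maps, letting the geometry of the level sets established in the preceding lemma dictate the two cases. Since $\C$ is abelian, the coadjoint action on $\C^{*}$ is trivial, and the non-equivariance computed in Lemma~\ref{lem:I-equivariance} promotes it to the affine action $a \cdot \mu = \mu - \rmi\Gamma a$, whose cocycle is $\sigma(a) = -\rmi\Gamma a$. The reduced space at the value $-\rmi c$ is then $\mathbf{I}^{-1}(-\rmi c)/\C_{-\rmi c}$, where $\C_{-\rmi c}$ is the isotropy subgroup of $-\rmi c$ under this affine action. The entire argument reduces to computing this isotropy in each case and then identifying the resulting quotient with a model space by a translation.

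For $\Gamma \neq 0$, the first step is to observe that the affine action $a \cdot \mu = \mu - \rmi\Gamma a$ is transitive on $\C^{*}$, so that the isotropy $\C_{-\rmi c}$ is trivial: $-\rmi\Gamma a = 0$ forces $a = 0$. Hence the reduced space is the level set $\mathbf{I}^{-1}(-\rmi c)$ itself, with no quotient taken; this is consistent with the preceding lemma, which shows that the level set is symplectic precisely when $\Gamma \neq 0$. To identify it with $\mathbf{I}^{-1}(0)$, I would translate by the group element $a = c/\Gamma$, i.e., by $(c/\Gamma)\mathbf{1}$; since $\mathbf{I}((c/\Gamma)\mathbf{1}) = -\rmi c$, this translation carries $\mathbf{I}^{-1}(0)$ onto $\mathbf{I}^{-1}(-\rmi c)$, and because it is a symmetry of $\Omega$ it is a symplectomorphism of the two level sets. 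Finally, $\mathbf{I}^{-1}(0) = \setdef{\mathbf{q} \in \C^{N}}{\sum_{j=1}^{N}\Gamma_{j}q_{j} = 0}$ is a single complex-linear hyperplane in $\C^{N}$, hence $\cong \C^{N-1}$.

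For $\Gamma = 0$, the cocycle $\sigma$ vanishes, so the affine action is trivial and the isotropy of every point is all of $\C$; the reduced space is therefore the orbit space $\mathbf{I}^{-1}(-\rmi c)/\C$. The preceding lemma shows that in this case the level set is coisotropic with symplectic orthogonal complement equal to $\C\mathbf{1}$, which is exactly the tangent distribution of the $\C$-orbits; thus the Marsden--Weinstein quotient coincides with the coisotropic reduction by the characteristic foliation and inherits a symplectic structure. To obtain the model, I would again translate by any fixed $\mathbf{p}_{0} \in \mathbf{I}^{-1}(-\rmi c)$ (such a point exists because $\mathbf{I}$ is surjective when the $\Gamma_{j}$ are not all zero); since translations commute with the $\C$-action, this descends to a diffeomorphism of the quotients $\mathbf{I}^{-1}(-\rmi c)/\C \to \mathbf{I}^{-1}(0)/\C$. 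Because $\Gamma = 0$ we have $\C\mathbf{1} \subset \mathbf{I}^{-1}(0)$, so the quotient is $\C^{N-1}/\C \cong \C^{N-2}$.

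The step I expect to be most delicate is the non-equivariant case $\Gamma \neq 0$: the usual intuition that reduction always lowers dimension by twice the group dimension fails here, and one must invoke the affine-action formulation to see that the transitivity of the cocycle collapses the isotropy to a point, so that \emph{no} quotient is taken and the level set itself is the reduced symplectic manifold. Once the correct isotropy subgroups are identified in each case, the remaining verifications---that the translations are symplectomorphisms and that $\C\mathbf{1}$ sits inside $\mathbf{I}^{-1}(0)$ exactly when $\Gamma = 0$---are routine.
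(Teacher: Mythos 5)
Your proposal is correct and follows essentially the same route as the paper: non-equivariant Marsden--Weinstein reduction with the cocycle $\sigma(a) = -\rmi\Gamma a$, trivial isotropy when $\Gamma \neq 0$ versus full isotropy $\C$ when $\Gamma = 0$, and identification of the quotients with $\mathbf{I}^{-1}(0) \cong \C^{N-1}$ and $\mathbf{I}^{-1}(0)/\C\mathbf{1} \cong \C^{N-2}$ via symplectic translations. Your explicit translation element $a = c/\Gamma$ and the observation that the characteristic distribution $\C\mathbf{1}$ coincides with the orbit directions are exactly the points the paper relies on (the latter being its ``linear symplectic reduction of a coisotropic subspace'').
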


\begin{proof}
  Suppose first that $\Gamma \neq 0$.
  By Lemma~\ref{lem:I-equivariance}, the momentum map $\mathbf{I}$ is not equivariant.
  Therefore, we would like to invoke the non-equivariant symplectic reduction (see, e.g., \cite[p.~17]{MaMiOrPeRa2007}).
  Based on what we observed in the proof of Lemma~\ref{lem:I-equivariance}, we define a cocycle $\sigma\colon \C \to \C^{*} \cong \C$ as
  \begin{equation*}
    \sigma(a) \defeq \mathbf{I}(\mathbf{q} + a\mathbf{1}) - \mathbf{I}(\mathbf{q}) = -\rmi \Gamma a.
  \end{equation*}
  This gives rise to the new action $\Xi\colon \C \times \C^{*} \to \C^{*}$ defined by
  \begin{equation*}
    \Xi(a,-\rmi c) \defeq -\rmi c + \sigma(a) = -\rmi(c + \Gamma a).
  \end{equation*}
  The isotropy group of this action is clearly trivial, i.e., $\C_{-\rmi c} = \{0\}$.
  Hence the (non-equivariant) Marsden--Weinstein quotient is $\mathbf{I}^{-1}(-\rmi c)$ itself.
  However, one may shift the origin of $\C^{N}$ so that the affine space $\mathbf{I}^{-1}(-\rmi c)$ becomes the subspace $\mathbf{I}^{-1}(0) \cong \C^{N-1}$.
  Note that this does not affect the dynamics because of the translational symmetry of the Hamiltonian~\eqref{eq:H}.

  Now suppose that $\Gamma = 0$.
  Then, by Lemma~\ref{lem:I-equivariance}, the momentum map $\mathbf{I}$ is equivariant.
  Since $\C$ is abelian, the isotropy group is given by $\C_{-\rmi c} = \C$.
  Hence we obtain the Marsden--Weinstein quotient $\mathbf{I}^{-1}(-\rmi c)/\mathbb{C}$.
  One sees from \eqref{eq:I-level_set} that $\mathbf{I}^{-1}(-\rmi c)$ defines an affine space of (complex) codimension one.
  Since $\C$ acts on it by translations in the direction of $\mathbf{1}$ inside $\mathbf{I}^{-1}(-\rmi c)$, one sees that the quotient $\mathbf{I}^{-1}(-\rmi c)$ is an affine space of (complex) codimension two, i.e., $\mathbf{I}^{-1}(-\rmi c)/\mathbb{C} \cong \C^{N-2}$.
  Alternatively, for the same reason as above, one may identify $\mathbf{I}^{-1}(-\rmi c)$ with the subspace $\mathbf{I}^{-1}(0)$.
  Then it is easy to see that $\mathbf{I}^{-1}(0)/\C$ is a quotient of a vector space $\mathbf{I}^{-1}(0) \cong \C^{N-1}$ by its subspace $\C\mathbf{1}$ and hence is isomorphic to $\C^{N-2}$.
  This is nothing but the linear symplectic reduction of a coisotropic subspace; see, e.g., \citet[Lemma~2.1.7]{McSa2016}.
\end{proof}

\subsection{Symplectic Forms on $\R^{2}$-Reduced Space}
Let us first consider the case with $\Gamma \neq 0$.
The above proposition tells us that the reduced space by translational symmetry may be identified with the subspace
\begin{equation*}
  \mathbf{I}^{-1}(0) = \setdef{ (q_{1}, \dots, q_{N}) \in \C^{N} }{ \sum_{j=1}^{N} \Gamma_{j}q_{j} = 0 }.
\end{equation*}
We parametrize this subspace using the relative positions of the first $N-1$ point vortices with respect to the last one, i.e.,
\begin{equation}
  \label{eq:z}
  z = (z_{1}, \dots, z_{N-1}) \defeq (q_{1} - q_{N}, \dots, q_{N-1} - q_{N}) \in \C^{N-1}.
\end{equation}
Then,
\begin{align*}
  \mathbf{I}^{-1}(0)
  &= \setdef{ (z_{1}, \dots, z_{N-1}, 0) + q_{N}\mathbf{1} \in \C^{N} }{ q_{N} = -\frac{1}{\Gamma} \sum_{j=1}^{N-1} \Gamma_{j} z_{j} } \\
  &\cong \{ (z_{1}, \dots, z_{N-1}) \in \C^{N-1} \} = \C^{N-1}.
\end{align*}
We remove the those points for $N$-tuple collisions $q_{1} = \dots = q_{N}$ or equivalently $z = 0$ to define
\begin{equation*}
  Z \defeq \mathbf{I}^{-1}(0) \backslash \{\text{$N$-tuple collisions}\}
  \cong \C^{N-1} \backslash \{0\}.
\end{equation*}
Let us find the symplectic form $\Omega_{Z}$ induced on $Z$ by $\Omega$.
\begin{proposition}
  \label{prop:Omega_Z}
  If $\Gamma \neq 0$, then the symplectic form on the $\R^{2}$-reduced space $Z$ can be written as
  \begin{equation*}
    \Omega_{Z} = -\d\Theta_{Z},
  \end{equation*}
  where $\Theta_{Z}$ is the one-form on $Z \cong \C^{N-1} \backslash \{0\}$ defined as
  \begin{equation*}
    \Theta_{Z} \defeq \frac{1}{2}\Im(z^{*} \mathcal{K} \d{z})
  \end{equation*}
  with
  \begin{equation}
    \label{eq:K}
    \mathcal{K} \defeq \frac{1}{\Gamma}
    \begin{bmatrix}
      -\Gamma_{1}(\Gamma - \Gamma_{1}) & \Gamma_{1} \Gamma_{2} & \dots & \Gamma_{1} \Gamma_{N-1} \\
      \Gamma_{2} \Gamma_{1} & -\Gamma_{2}(\Gamma - \Gamma_{2}) & \dots & \Gamma_{2} \Gamma_{N-1} \\
      \vdots & \vdots & \ddots & \vdots \\
      \Gamma_{N-1} \Gamma_{1} & \Gamma_{N-1} \Gamma_{2} & \dots & -\Gamma_{N-1}(\Gamma - \Gamma_{N-1})
    \end{bmatrix}.
  \end{equation}
\end{proposition}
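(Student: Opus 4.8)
The plan is to exploit the fact that, since $\Gamma \neq 0$, the level set $\mathbf{I}^{-1}(0)$ is a symplectic submanifold of $(\C^{N}, \Omega)$, so that the symplectic form on the reduced space is simply the pullback $\Omega_{Z} = \iota^{*}\Omega$ along the inclusion $\iota\colon Z \hookrightarrow \C^{N}$. Because $\Omega = -\d\Theta$ is exact and pullback commutes with the exterior derivative, it suffices to compute $\iota^{*}\Theta$ and to verify that it coincides with the claimed one-form $\Theta_{Z}$; then $\Omega_{Z} = -\d(\iota^{*}\Theta) = -\d\Theta_{Z}$ follows at once.

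To carry out the pullback I would use the parametrization established in the proof of Proposition~\ref{prop:translational_reduction}: on $\mathbf{I}^{-1}(0)$ one has $q_{j} = z_{j} + q_{N}$ for $j = 1, \dots, N-1$ with $q_{N} = -\frac{1}{\Gamma}\sum_{k=1}^{N-1}\Gamma_{k} z_{k}$, and correspondingly $\d q_{j} = \d z_{j} + \d q_{N}$ and $\d q_{N} = -\frac{1}{\Gamma}\sum_{k=1}^{N-1}\Gamma_{k}\,\d z_{k}$. Since $\Im$ is $\R$-linear, I would pull it outside and work with the expression $\sum_{j=1}^{N}\Gamma_{j}q_{j}^{*}\d q_{j}$. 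Substituting and handling the $N$-th term separately, the constraint relations $\sum_{j=1}^{N-1}\Gamma_{j}z_{j} = -\Gamma q_{N}$ and $\sum_{j=1}^{N-1}\Gamma_{j}\,\d z_{j} = -\Gamma\,\d q_{N}$ cause the cross terms to telescope, leaving $\sum_{j=1}^{N}\Gamma_{j}q_{j}^{*}\d q_{j} = \sum_{j=1}^{N-1}\Gamma_{j}z_{j}^{*}\d z_{j} - \Gamma\,q_{N}^{*}\d q_{N}$. Re-expressing the last term through $q_{N}$ produces a quadratic form $\sum_{k,l=1}^{N-1} z_{k}^{*} M_{kl}\,\d z_{l}$ whose coefficient matrix $M$ has entries $M_{kk} = \Gamma_{k}(\Gamma - \Gamma_{k})/\Gamma$ and $M_{kl} = -\Gamma_{k}\Gamma_{l}/\Gamma$ for $k \neq l$.

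It then remains to match signs. Since $\Theta = -\frac{1}{2}\Im\parentheses{\sum_{j}\Gamma_{j}q_{j}^{*}\d q_{j}}$, the computation identifies $\iota^{*}\Theta = -\frac{1}{2}\Im(z^{*} M\,\d z) = \frac{1}{2}\Im(z^{*}(-M)\,\d z)$, and reading off the entries shows $-M = \mathcal{K}$ as displayed in~\eqref{eq:K}. This yields $\Theta_{Z} = \iota^{*}\Theta$ and hence the proposition.

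The computation is essentially routine linear algebra, so I do not anticipate a genuine conceptual obstacle; the one place demanding care is the bookkeeping of the cross terms in the second step. One must invoke the constraint in both its undifferentiated and differentiated forms and keep track of the coefficient $\sum_{j=1}^{N-1}\Gamma_{j} = \Gamma - \Gamma_{N}$, so that the spurious $\Gamma_{N}$-dependence cancels and only the combination encoded by $\mathcal{K}$ survives. A useful consistency check is to confirm at the end that the resulting $\mathcal{K}$ is real and symmetric, exactly as the displayed matrix indicates.
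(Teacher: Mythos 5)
Your proposal is correct and follows essentially the same route as the paper: identify $Z$ with $\mathbf{I}^{-1}(0)$ via the parametrization $q_{j} = z_{j} + q_{N}$, $q_{N} = -\frac{1}{\Gamma}\sum_{k=1}^{N-1}\Gamma_{k}z_{k}$, pull back $\Theta$ along the embedding, and use $\Omega_{Z} = \iota^{*}\Omega = -\d(\iota^{*}\Theta)$. The telescoping computation you outline (reducing $\sum_{j}\Gamma_{j}q_{j}^{*}\d q_{j}$ to $\sum_{j=1}^{N-1}\Gamma_{j}z_{j}^{*}\d z_{j} - \Gamma\,q_{N}^{*}\d q_{N}$ and reading off $-M = \mathcal{K}$) is exactly the ``straightforward calculation'' the paper leaves implicit, and it checks out.
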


\begin{proof}
  The constraint $\sum_{j=1}^{N}\Gamma_{j} q_{j} = 0$ for $q$ to be in $Z = \mathbf{I}^{-1}(0)$ is rewritten in terms of $z$ as
  \begin{equation*}
    \sum_{j=1}^{N}\Gamma_{j} q_{j} = 0
    \iff
    \sum_{j=1}^{N-1} \Gamma_{j} z_{j} + \Gamma q_{N} = 0
    \iff
    q_{N} = -\frac{1}{\Gamma} \sum_{j=1}^{N-1} \Gamma_{j} z_{j},
  \end{equation*}
  and thus we may write the embedding $\iota\colon Z \hookrightarrow \C^{N}$ as
  \begin{equation*}
    \iota\colon (z_{1}, \dots, z_{N-1}) \mapsto
    (z_{1} + q_{N}, \dots, z_{N-1} + q_{N}, q_{N}).
  \end{equation*}
  Then, straightforward calculations yield the pull-back
  \begin{align*}
    \Theta_{Z} &\defeq \iota^{*}\Theta \\
               &= \frac{1}{2\Gamma} \parentheses{
                 -\sum_{j=1}^{N-1} \Gamma_{j}(\Gamma - \Gamma_{j}) \Im(z_{j}^{*}\d{z_{j}})
                 + \sum_{\substack{1\le j, k\le N-1\\ j\neq k}} \Gamma_{j} \Gamma_{k} \Im(z_{j}^{*}\d{z_{k}})
                 } \\
               &= \frac{1}{2}\Im(z^{*} \mathcal{K} \d{z}).
  \end{align*}
  Hence the symplectic form on $Z$ is given by
  \begin{equation*}
    \Omega_{Z} = \iota^{*}\Omega = -\d(\iota^{*}\Theta) = -\d\Theta_{Z}. \qedhere
  \end{equation*}
\end{proof}

\begin{remark}
  The matrix $\mathcal{K}$ is invertible under our assumption that $\Gamma_{j} \neq 0$ for $j \in \{1, \dots, N\}$; see Lemma~\ref{lem:detK}.
\end{remark}

What if $\Gamma = 0$?
In this case, we may write the embedding $i_{0}\colon \mathbf{I}^{-1}(0) \hookrightarrow \C^{N}$ as
\begin{equation*}
  i_{0}\colon (q_{1}, \dots, q_{N-1}) \mapsto
  \parentheses{
    q_{1}, \dots, q_{N-1}, -\frac{1}{\Gamma_{N}} \sum_{j=1}^{N-1} \Gamma_{j} q_{j}
  }.
\end{equation*}

The pull-back of the canonical one-form $\Theta$ by $i_{0}$ is then
\begin{equation*}
  i_{0}^{*}\Theta =
  -\frac{1}{2\Gamma_{N}} \parentheses{
    \sum_{j=1}^{N-1} \Gamma_{j}(\Gamma_{N} + \Gamma_{j}) \Im(q_{j}^{*}\d{q}_{j})
    + \sum_{\substack{1\le j, k\le N-1\\ j\neq k}} \Gamma_{j} \Gamma_{k} \Im(q_{j}^{*}\d{q}_{k})
  }.
\end{equation*}
Let us set, with a slight abuse of notation,
\begin{equation*}
  z = (z_{1}, \dots, z_{N-2}) \defeq (q_{1} - q_{N-1}, \dots, q_{N-2} - q_{N-1}) \in \C^{N-2}
\end{equation*}
as in \eqref{eq:z}.
Notice that $z$ is in $\C^{N-2}$ as opposed to $\C^{N-1}$ here; compare with \eqref{eq:z}.
Then $z$ provides a set of coordinates for the reduced space $\mathbf{I}^{-1}(0)/\C$.
Now, $z = 0$ again exactly corresponds to $N$-tuple collisions here, and so we define
\begin{equation*}
  Z_{0} \defeq (\mathbf{I}^{-1}(0)/\C)\backslash\{\text{$N$-tuple collisions}\}
  \cong \C^{N-2} \backslash \{0\}.
\end{equation*}
We may then rewrite the above pull-back in terms of $z$ as follows:
\begin{equation*}
  i_{0}^{*}\Theta
  = -\frac{1}{2\Gamma_{N}} \parentheses{
    \sum_{j=1}^{N-2} \Gamma_{j}(\Gamma_{N} + \Gamma_{j}) \Im(z_{j}^{*}\d{z}_{j})
    + \sum_{\substack{1\le j, k\le N-2\\ j\neq k}} \Gamma_{j} \Gamma_{k} \Im(z_{j}^{*}\d{z}_{k})
  }.
\end{equation*}
Hence we have
\begin{equation*}
  i_{0}^{*}\Omega = -i_{0}^{*}\d\Theta = -\d i_{0}^{*}\Theta = \pi_{0}^{*}\Omega_{Z_{0}},
\end{equation*}
where $\pi_{0}\colon \mathbf{I}^{-1}(0) \to \mathbf{I}^{-1}(0)/\C$ is the quotient map, and $\Omega_{Z_{0}} \defeq -\d\Theta_{Z_{0}}$ with
\begin{equation}
  \label{eq:Theta_Z_0}
  \Theta_{Z_{0}} \defeq \frac{1}{2}\Im(z^{*} \mathcal{K}_{0}\,\d{z})
\end{equation}
and
\begin{equation}
  \label{eq:K_0}
  \mathcal{K}_{0} \defeq -\frac{1}{\Gamma_{N}}
  \begin{bmatrix}
    \Gamma_{1}(\Gamma_{N} + \Gamma_{1}) & \Gamma_{1} \Gamma_{2} & \dots & \Gamma_{1} \Gamma_{N-2} \\
    \Gamma_{2} \Gamma_{1} & \Gamma_{2}(\Gamma_{N} + \Gamma_{2}) & \dots & \Gamma_{2} \Gamma_{N-2} \\
    \vdots & \vdots & \ddots & \vdots \\
    \Gamma_{N-2} \Gamma_{1} & \Gamma_{N-2} \Gamma_{2} & \dots & \Gamma_{N-2}(\Gamma_{N} + \Gamma_{N-2})
  \end{bmatrix}.
\end{equation}

To summarize, we have:
\begin{proposition}
  \label{prop:Omega_Z_0}
  If $\Gamma = 0$, then the symplectic form on the $\R^{2}$-reduced space $Z_{0} \cong \C^{N-2} \backslash \{0\}$ is given by $\Omega_{Z_{0}} = -\d\Theta_{Z_{0}}$ where $\Theta_{Z_{0}}$ is the one-form defined in \eqref{eq:Theta_Z_0} along with \eqref{eq:K_0}.
\end{proposition}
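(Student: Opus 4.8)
The plan is to mirror the proof of Proposition~\ref{prop:Omega_Z}, modified to account for the fact that when $\Gamma = 0$ the level set $\mathbf{I}^{-1}(0)$ is only coisotropic, so one must pass to the quotient by $\C$. Much of the computation has already been carried out in the discussion preceding the statement, so the proof consists of assembling those steps and, above all, justifying the descent to $Z_{0}$. First I would record the embedding $i_{0}\colon \mathbf{I}^{-1}(0) \hookrightarrow \C^{N}$ obtained by solving $\sum_{j=1}^{N} \Gamma_{j} q_{j} = 0$ for the pivot $q_{N}$ (legitimate since $\Gamma_{N} \neq 0$ by the standing assumption), and then compute the pull-back $i_{0}^{*}\Theta$ by direct substitution, obtaining a one-form in the free coordinates $q_{1}, \dots, q_{N-1}$ on $\mathbf{I}^{-1}(0)$.

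The crux is to show that $i_{0}^{*}\Theta$ descends along the quotient map $\pi_{0}\colon \mathbf{I}^{-1}(0) \to \mathbf{I}^{-1}(0)/\C$. I would argue that $i_{0}^{*}\Theta$ is a \emph{basic} one-form, i.e., both $\C$-invariant and horizontal with respect to the vertical direction $\C\mathbf{1}$. Horizontality is the identity $\ins{\mathbf{1}}\Theta|_{\mathbf{I}^{-1}(0)} = -\frac{1}{2}\Im\parentheses{\sum_{j=1}^{N} \Gamma_{j} q_{j}^{*}} = 0$, while invariance follows because, under $\mathbf{q} \mapsto \mathbf{q} + a\mathbf{1}$, the one-form $\Theta$ changes by $-\frac{1}{2}\Im\parentheses{a^{*} \sum_{j=1}^{N} \Gamma_{j}\,\d{q}_{j}}$, whose restriction to $\mathbf{I}^{-1}(0)$ vanishes because $\sum_{j=1}^{N} \Gamma_{j}\,\d{q}_{j} = 0$ there. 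This is exactly where $\Gamma = 0$ is essential: it is what makes $\C\mathbf{1}$ tangent to $\mathbf{I}^{-1}(0)$, so that $\mathbf{1}$ is genuinely the vertical generator and $\C\mathbf{1} = (T\mathbf{I}^{-1}(0))^{\Omega}$ is the characteristic (null) distribution of $i_{0}^{*}\Omega$, in contrast to the symplectic case $\Gamma \neq 0$. Being invariant and horizontal, $i_{0}^{*}\Theta$ equals $\pi_{0}^{*}\Theta_{Z_{0}}$ for a unique one-form $\Theta_{Z_{0}}$ on the quotient.

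To obtain the explicit expression \eqref{eq:Theta_Z_0}--\eqref{eq:K_0}, I would introduce the $\C$-invariant relative coordinates $z_{j} \defeq q_{j} - q_{N-1}$, $j \in \{1, \dots, N-2\}$, which descend to coordinates on $Z_{0} \cong \C^{N-2}\backslash\{0\}$, and substitute $q_{j} = z_{j} + q_{N-1}$ into $i_{0}^{*}\Theta$; by the basic-ness just established, all $q_{N-1}$- and $\d{q}_{N-1}$-dependence must cancel, and collecting the surviving coefficients (a routine bookkeeping using $\sum_{j=1}^{N-1} \Gamma_{j} = -\Gamma_{N}$) yields $\Theta_{Z_{0}} = \frac{1}{2}\Im(z^{*} \mathcal{K}_{0}\,\d{z})$ with $\mathcal{K}_{0}$ as in \eqref{eq:K_0}. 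Finally, applying $-\d$ and using that pull-back commutes with $\d$ gives $i_{0}^{*}\Omega = -\d{i_{0}^{*}\Theta} = \pi_{0}^{*}(-\d\Theta_{Z_{0}}) = \pi_{0}^{*}\Omega_{Z_{0}}$, which is precisely the defining relation of the reduced symplectic form on the coisotropic quotient (cf.\ \citet[Lemma~2.1.7]{McSa2016}), identifying $\Omega_{Z_{0}} = -\d\Theta_{Z_{0}}$ as the form induced by $\Omega$ on $Z_{0}$. The main obstacle is the descent argument of the second paragraph; once basic-ness is secured, the remaining steps are the coordinate substitution and a single application of $-\d$.
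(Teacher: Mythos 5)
Your proposal is correct and follows essentially the same route as the paper: pull back $\Theta$ along $i_{0}$, rewrite in the relative coordinates $z_{j} = q_{j} - q_{N-1}$, and define $\Omega_{Z_{0}}$ through $i_{0}^{*}\Omega = \pi_{0}^{*}\Omega_{Z_{0}}$ via the linear coisotropic reduction. The only difference is that you make explicit, via the invariance-plus-horizontality (basic-ness) argument, why $i_{0}^{*}\Theta$ descends to the quotient and why the $q_{N-1}$-dependence must cancel — a step the paper leaves implicit in its direct computation — and that addition is sound.
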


\begin{remark}
  \label{rem:K-K_0}
  Comparing the matrices $\mathcal{K}$ from \eqref{eq:K} and $\mathcal{K}_{0}$ from above, one notices that the symplectic form $\Omega_{Z_{0}}$ is identical to that of $\Omega_{Z}$ for $N-1$ (as opposed to $N$) vortices with $\Gamma$ replaced by $-\Gamma_{N}$.
  That is, after the $\R^{2}$-reduction, \textit{the symplectic structure for $N$ point vortices with vanishing total circulation (i.e., $\Gamma = 0$) is the same as that for (the first) $N-1$ point vortices whose total circulation is $-\Gamma_{N} \neq 0$}.
  We note that \citet{Aref1989} observed that three-vortex motion with zero total circulation can be effectively reduced to a two-vortex problem.
  Similarly, \citet{ArSt1999} showed that four-vortex motion with zero total circulation---which is known to be integrable~\cite{Ec1988}---can be reduced to a three-vortex one as well.
\end{remark}

\section{Reduction by Rotational Symmetry}
\label{sec:ReductionByRotation}
Let us perform the further reduction by rotational symmetry.
This is the second stage of the semidirect product reduction by $\SE(2) = \SO(2) \ltimes \R^{2}$, and is more involved than that by translations.

The key ingredient is the pair of momentum maps $R$ and $\mathbf{J}$ found in the two subsections to follow:
\begin{equation}
  \label{eq:pre-dual_pair}
  \begin{tikzcd}
    \R & Z \arrow[swap]{l}{R} \arrow{r}{\mathbf{J}} & \u(\mathcal{K})^{*}.
  \end{tikzcd}
\end{equation}
The first momentum map $R$ is the conserved quantity corresponding to the $\SO(2)$-symmetry, and hence its role is clear from the point of view of symplectic reduction: The reduced space by the rotational symmetry is the Marsden--Weinstein quotient $R^{-1}(c_{0})/\mathbb{S}^{1}$ for an arbitrary regular value $c_{0} \in \R$.
The problem is that this quotient is not easy to describe and parametrize, and hence is not amenable to writing down the reduced dynamics explicitly.

Instead, we exploit the other momentum map $\mathbf{J}$, which corresponds to the natural action of the unitary group $\U(\mathcal{K})$ (see Section~\ref{ssec:U(K)_u(K)} below) on the $\R^{2}$-reduced space $Z$.
Note that this is \textit{not} a conserved quantity because $\U(\mathcal{K})$ is not a symmetry group of the system.
We show that $R$ and $\mathbf{J}$ constitute a so-called dual pair (see, e.g., \citet{We1983} and \citet[Chapter~11]{OrRa2004}) on a certain open subset of $Z$.
The dual pair helps us identify the reduced space $R^{-1}(c_{0})/\mathbb{S}^{1}$ with a coadjoint orbit in $\u(\mathcal{K})^{*}$, hence resulting in the Lie--Poisson formulation of the reduced dynamics.

Throughout the section, we will describe the results for the case with $\Gamma \neq 0$ with the symplectic manifold $Z$ and the symplectic structure $\Omega_{Z}$ defined in terms of the matrix $\mathcal{K}$.
Similar results hold for the case with $\Gamma = 0$ and $Z_{0}$ by replacing $N$ by $N-1$ and the matrix $\mathcal{K}$ by $\mathcal{K}_{0}$.

\subsection{Rotational Action on $Z$}
Let $\mathbb{S}^{1} = \setdef{ e^{\rmi\theta} \in \C }{ \theta \in [0, 2\pi) } \cong \SO(2)$ and consider the action
\begin{equation}
  \label{eq:Psi}
  \Psi\colon \mathbb{S}^{1} \times Z \to Z;
  \qquad
  \parentheses{ e^{\rmi\theta}, z = (z_{1}, \dots, z_{N-1}) }
  \mapsto \parentheses{ e^{\rmi\theta}z_{1}, \dots, e^{\rmi\theta}z_{N-1} }.
\end{equation}
This is the rotational action induced on $Z$ by the $\SE(2)$ action defined in \eqref{eq:SE2-action} after the translational $\R^{2}$-reduction performed above.
The one-form $\Theta_{Z}$ is clearly invariant under the action $\Psi$ and hence so is the symplectic form $\Omega_{Z}$ obtained in Proposition~\ref{prop:Omega_Z}, i.e., $\Psi_{e^{\rmi\theta}}^{*}\Theta_{Z} = \Theta_{Z}$ and hence $\Psi_{e^{\rmi\theta}}^{*}\Omega_{Z} = \Omega_{Z}$ for any $e^{\rmi\theta} \in \mathbb{S}^{1}$.

The corresponding infinitesimal generator is defined for any $\omega \in \so(2) \cong \R$ as follows:
\begin{equation*}
  \omega_{Z}(z)
  \defeq \left.\od{}{s} \Psi_{\exp(\rmi s\omega)}(z) \right|_{s=0}
  = \rmi\,\omega \sum_{j=1}^{N-1} \parentheses{ z_{j} \pd{}{z_{j}} - z_{j}^{*} \pd{}{z_{j}^{*}} }.
\end{equation*}
Hence the corresponding momentum map is $R\colon Z \to \R$ defined as
\begin{align*}
  R(z) \omega &= \ip{ \Theta_{Z}(z) }{ \omega_{Z}(z) } \\
  &= \frac{\omega}{2} \Im(\rmi z^{*}\mathcal{K} z) \\
  &= -\frac{\omega}{2} z^{*}\mathcal{K} z
\end{align*}
for any $\omega \in \so(2) \cong \R$. 
Therefore, we have
\begin{equation}
  \label{eq:R}
  R(z) = -\frac{1}{2} z^{*}\mathcal{K} z.
\end{equation}
Since our system has $\mathbb{S}^{1}$-symmetry, $R$ is a conserved quantity of the dynamics.
In fact, this is the so-called \textit{angular impulse}; see, e.g., \citet[Section~2.1]{Ne2001} and \citet{Ar2007}.

\subsection{Lie Group $\U(\mathcal{K})$ and Lie Algebra $\u(\mathcal{K})$}
\label{ssec:U(K)_u(K)}
Let us define a Lie group $\U(\mathcal{K})$ that naturally acts on $Z$ symplectically; then the other leg $\mathbf{J}$ of the dual pair follows from this action.
This subsection essentially reproduces the treatment of the vortex algebra of \citet{BoBoMa1999}.
The difference is that our group acts on the $\R^{2}$-reduced space $Z$ (or $Z_{0}$ if $\Gamma = 0$) whereas theirs acts on the original configuration space $\C^{N}$.
This difference stems from the fact we perform $\R^{2}$-reduction first whereas they perform $\SO(2)$-reduction first; see Section~\ref{ssec:outline} for the reason why we prefer to do so.

Let us define the Lie group
\begin{equation*}
  \U(\mathcal{K}) \defeq \setdef{ U \in \C^{(N-1)\times(N-1)} }{ U^{*} \mathcal{K} U = \mathcal{K} }.
\end{equation*}
It acts on $Z$ as follows:
\begin{equation}
  \label{eq:Phi}
  \Phi\colon \U(\mathcal{K}) \times Z \to Z;
  \qquad
  (U, z) \mapsto U z.
\end{equation}
Clearly $\Phi$ leaves the one-form $\Theta_{Z}$ invariant and hence is symplectic with respect to the symplectic form $\Omega_{Z}$.

The Lie algebra of $\U(\mathcal{K})$ is given by
\begin{equation*}
  \u(\mathcal{K}) \defeq \setdef{ \tilde{\xi} \in \C^{(N-1)\times(N-1)} }{ \tilde{\xi}^{*} \mathcal{K} + \mathcal{K} \tilde{\xi} = 0 }.
\end{equation*}
In what follows, we will not directly work with $\u(\mathcal{K})$ because it turns out to be more convenient to instead work with the Lie algebra 
\begin{equation*}
  \mathfrak{v}_{\mathcal{K}} \defeq \setdef{ \xi \in \C^{(N-1)\times(N-1)} }{ \xi^{*} = -\xi }
\end{equation*}
equipped with the non-standard Lie bracket
\begin{equation}
  \label{eq:Lie_bracket-v_K}
  [\xi, \eta]_{\mathcal{K}} \defeq \xi \mathcal{K}^{-1} \eta - \eta \mathcal{K}^{-1} \xi.
\end{equation}
In fact, we see that the map
\begin{equation}
  \label{eq:u(K)-v_K}
  \u(\mathcal{K}) \to \mathfrak{v}_{\mathcal{K}};
  \qquad
  \tilde{\xi} \mapsto \mathcal{K}\tilde{\xi} \eqdef \xi
\end{equation}
is a Lie algebra isomorphism.
Hence we will use $\u(\mathcal{K})$ and $\mathfrak{v}_{\mathcal{K}}$ interchangeably in what follows.
Note that, as a vector space, $\mathfrak{v}_{\mathcal{K}}$ is a subspace of $\u(N-1)$, but is not a subalgebra of $\u(N-1)$ in general.
\begin{remark}
  Under certain conditions on the circulations $\{ \Gamma_{j} \in \R\backslash\{0\} \}_{j=1}^{N}$, one can show that $\mathfrak{v}_{\mathcal{K}}$ is isomorphic to $\u(N-1)$; see \citet[Proposition~4]{BoBoMa1999}.
\end{remark}

Given an arbitrary $\tilde{\xi} \in \u(\mathcal{K})$, its infinitesimal generator is given by
\begin{equation*}
  \tilde{\xi}_{Z}(z) \defeq \left.\od{}{s}\right|_{s=0} \Phi_{\exp(s\tilde{\xi})}(z) = \tilde{\xi} z .
\end{equation*}
Alternatively, given an arbitrary $\xi \in \mathfrak{v}_{\mathcal{K}}$, one defines its infinitesimal generator by
\begin{equation*}
  \xi_{Z}(z) \defeq \mathcal{K}^{-1}\xi z.
\end{equation*}

What is the corresponding momentum map $\mathbf{J}\colon Z \to \u(\mathcal{K})^{*} \cong \mathfrak{v}_{\mathcal{K}}^{*}$?
First equip $\mathfrak{v}_{\mathcal{K}}$ with the inner product $\ip{\,\cdot\,}{\,\cdot\,}\colon \mathfrak{v}_{\mathcal{K}} \times \mathfrak{v}_{\mathcal{K}} \to \R$ by
\begin{equation*}
  \ip{\xi}{\eta} \defeq \frac{1}{2}\tr(\xi^{*}\eta),
\end{equation*}
and identify $\mathfrak{v}_{\mathcal{K}}^{*}$ with $\mathfrak{v}_{\mathcal{K}}$ via the inner product.
Let $\xi \in \mathfrak{v}_{\mathcal{K}}$ be arbitrary.
Then the momentum map $\mathbf{J}\colon Z \to \mathfrak{v}_{\mathcal{K}}^{*}$ is defined by
\begin{align*}
  \ip{\mathbf{J}(z)}{\xi}
  &= \ip{\Theta_{Z}(z)}{\xi_{Z}(z)} \\
  &= \frac{1}{2}\Im(z^{*} \mathcal{K} \mathcal{K}^{-1}\xi z) \\
  &= \frac{1}{2}\Im(z^{*} \xi z) \\
  &= \frac{1}{2}\tr\parentheses{ (\rmi z z^{*})^{*} \xi } \\
  &= \ip{\rmi z z^{*}}{\xi},
\end{align*}
that is,
\begin{equation}
  \label{eq:J}
  \mathbf{J}(z) = \rmi z z^{*}.
\end{equation}

We continue our treatment of $\U(\mathcal{K})$ and $\u(\mathcal{K})$---especially the associated coadjoint action and representation---in Appendix~\ref{sec:more_on_U(K)}.

\subsection{Reduction by Rotations via a Dual Pair}
\label{ssec:reduction_by_rotation}
Now that we have the pair of canonical actions $\Psi$ and $\Phi$ on $Z$ and the corresponding momentum maps $R$ and $\mathbf{J}$, the last piece of the puzzle is to identify the Marsden--Weinstein quotient $R^{-1}(c_{0})/\mathbb{S}^{1}$ with a coadjoint orbit in $\mathfrak{v}_{\mathcal{K}}^{*}$.
To that end, let us prove two lemmas that are essential for our purpose:
\begin{lemma}
  \label{lem:level_set-J}
  Each level set of $\mathbf{J}$ is an $\mathbb{S}^{1}$-orbit, i.e., for any $z \in Z$, $\mathbf{J}^{-1}(\mathbf{J}(z)) = \mathbb{S}^{1} \cdot z$.
\end{lemma}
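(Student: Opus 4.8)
The plan is to prove the two inclusions $\mathbb{S}^{1}\cdot z \subseteq \mathbf{J}^{-1}(\mathbf{J}(z))$ and $\mathbf{J}^{-1}(\mathbf{J}(z)) \subseteq \mathbb{S}^{1}\cdot z$ separately. The first is immediate: using the formula \eqref{eq:J} for $\mathbf{J}$ together with the action \eqref{eq:Psi}, one computes $\mathbf{J}(e^{\rmi\theta} z) = \rmi\,(e^{\rmi\theta}z)(e^{\rmi\theta}z)^{*} = \rmi\,e^{\rmi\theta} z z^{*} e^{-\rmi\theta} = \rmi z z^{*} = \mathbf{J}(z)$ for every $e^{\rmi\theta}\in\mathbb{S}^{1}$, so the whole orbit lies in a single level set. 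This simply reflects the $\mathbb{S}^{1}$-invariance already noted for $\Theta_{Z}$.

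The substance is the reverse inclusion. Suppose $w \in Z$ satisfies $\mathbf{J}(w) = \mathbf{J}(z)$; by \eqref{eq:J} this is equivalent to the matrix identity $w w^{*} = z z^{*}$. I would then argue that this rank-one Hermitian matrix determines the vector up to a phase. The cleanest route is to apply both sides to the vector $z$ itself: since $z^{*}z = \norm{z}^{2}$ and $w^{*}z$ are scalars, one obtains $(w^{*}z)\,w = (z^{*}z)\,z = \norm{z}^{2} z$. Because $z \neq 0$ (as $z \in Z \cong \C^{N-1}\backslash\{0\}$), the right-hand side is nonzero, which forces $w^{*}z \neq 0$ and hence $w = \lambda z$ with $\lambda = \norm{z}^{2}/(w^{*}z) \in \C\backslash\{0\}$. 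Substituting back into $w w^{*} = z z^{*}$ gives $\abs{\lambda}^{2} z z^{*} = z z^{*}$, and since $z z^{*} \neq 0$ we conclude $\abs{\lambda} = 1$, i.e., $\lambda = e^{\rmi\theta}$ for some $\theta$. Thus $w = e^{\rmi\theta} z \in \mathbb{S}^{1}\cdot z$.

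I do not anticipate a genuine obstacle; the only step requiring care is extracting $w = \lambda z$ from $w w^{*} = z z^{*}$. The underlying fact is that the image of the rank-one operator $z z^{*}$ is exactly the complex line $\C z$, so any $w$ with $w w^{*} = z z^{*}$ must span the same line. The evaluation-at-$z$ computation above makes this concrete while simultaneously pinning down the modulus of the scalar, so it is the most economical phrasing to include in the writeup. An equivalent way to view the whole statement---worth a remark---is that $\rmi z z^{*}$ records the line $[z]$ together with the norm $\norm{z}$, so that the fibers of $\mathbf{J}$ are precisely the circles of constant norm within each line, which are exactly the $\mathbb{S}^{1}$-orbits.
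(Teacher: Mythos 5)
Your proof is correct, and the reverse inclusion is handled by a genuinely different argument from the paper's. The paper works componentwise: from $ww^{*} = zz^{*}$ it extracts $|w_{j}| = |z_{j}|$ and $w_{j}w_{k}^{*} = z_{j}z_{k}^{*}$, writes $w_{j} = e^{\rmi\theta_{j}}z_{j}$, and then uses the off-diagonal relations to force all the phases $\theta_{j}$ to coincide, with a separate case for the index set $\mathcal{I}_{0}$ of vanishing components $z_{j} = 0$. Your route is coordinate-free: evaluating the operator identity $ww^{*} = zz^{*}$ on the vector $z$ gives $(w^{*}z)\,w = \norm{z}^{2}z$, which immediately yields $w = \lambda z$ with $\lambda \neq 0$, and substituting back pins down $|\lambda| = 1$. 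This is cleaner in that it avoids the bookkeeping over zero versus nonzero components entirely, and it makes transparent the underlying geometric fact (which you state) that $\rmi zz^{*}$ records exactly the complex line $\C z$ together with $\norm{z}$. The paper's componentwise computation, on the other hand, exposes the explicit phase data $w_{j} = e^{\rmi\theta_{j}}z_{j}$, which is closer in spirit to how the $\mathbb{S}^{1}$-action \eqref{eq:Psi} acts on coordinates. Either argument is complete; yours is the more economical.
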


\begin{proof}
  Let $z \in Z$ be arbitrary, and let us show that $\mathbf{J}^{-1}(\mathbf{J}(z)) \subset \mathbb{S}^{1} \cdot z$.
  First observe that, in view of \eqref{eq:J},
  \begin{equation*}
    \mathbf{J}^{-1}(\mathbf{J}(z))
    = \setdef{ w \in Z }{ w w^{*} = z z^{*} }.
  \end{equation*}
  Hence if $w \in \mathbf{J}^{-1}(\mathbf{J}(z))$ then $w w^{*} = z z^{*}$; but then it implies that $|w_{j}| = |z_{j}|$ for any $j \in \mathcal{I} \defeq \{1, \dots, N-1\}$ as well as that $w_{j} w_{k}^{*} = z_{j} z_{k}^{*}$ for any $j, k \in \mathcal{I}$ with $j \neq k$.
  The former implies that $w_{j} = e^{\rmi \theta_{j}} z_{j}$ with some $\theta_{j} \in [0, 2\pi)$ for any $j \in \mathcal{I}$.
  Now, let
  \begin{equation*}
    \mathcal{I}_{0} \defeq \setdef{ j \in \mathcal{I} }{ z_{j} = 0 }.
  \end{equation*}
  If $j \in \mathcal{I}_{0}$, then $z_{j} = 0$ and thus it follows that $w_{j} = 0$.
  On the other hand, for any $j,k \in \mathcal{I} \backslash \mathcal{I}_{0}$ with $j \neq k$, the equality $w_{j} w_{k}^{*} = z_{j} z_{k}^{*}$ implies $e^{\rmi\theta_{j}} = e^{\rmi\theta_{k}}$.
  Therefore, for any $j \in \mathcal{I} \backslash \mathcal{I}_{0}$ we have $w_{j} = e^{\rmi\theta} z_{j}$ for some $\theta \in [0, 2\pi)$; in fact, this equality is trivially satisfied for any $j \in \mathcal{I}_{0}$ as well.
  As a result, we have $w = e^{\rmi\theta} z$, i.e., $w \in \mathbb{S}^{1} \cdot z$.
  Hence we have $\mathbf{J}^{-1}(\mathbf{J}(z)) \subset \mathbb{S}^{1} \cdot z$.
  The other inclusion $\mathbb{S}^{1} \cdot z \subset \mathbf{J}^{-1}(\mathbf{J}(z))$ is trivial.
\end{proof}

\begin{lemma}
  \label{lem:level_set-R}
  Each non-zero level set of $R$ is a $\U(\mathcal{K})$-orbit, i.e., for any $z \in Z\backslash R^{-1}(0)$, $R^{-1}(R(z)) = \U(\mathcal{K}) \cdot z$.
\end{lemma}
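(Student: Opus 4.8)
The plan is to prove the two inclusions separately, the forward one being immediate and the reverse (transitivity) being the real content. First I would record the easy inclusion $\U(\mathcal{K})\cdot z \subseteq R^{-1}(R(z))$: for any $U \in \U(\mathcal{K})$ the defining relation $U^{*}\mathcal{K}U = \mathcal{K}$ gives $R(Uz) = -\tfrac{1}{2}(Uz)^{*}\mathcal{K}(Uz) = -\tfrac{1}{2}z^{*}(U^{*}\mathcal{K}U)z = -\tfrac{1}{2}z^{*}\mathcal{K}z = R(z)$, so every point of the orbit lies on the same level set. Note this direction holds on all of $Z$ and does not use $R(z) \neq 0$.

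The substance is the reverse inclusion: given $z \in Z\setminus R^{-1}(0)$ and any $w \in R^{-1}(R(z))$, I must produce $U \in \U(\mathcal{K})$ with $w = Uz$. Writing $c \defeq -2R(z) = z^{*}\mathcal{K}z$, which is real because $\mathcal{K}$ is Hermitian, the hypothesis says $w^{*}\mathcal{K}w = z^{*}\mathcal{K}z = c \neq 0$. Since $\mathcal{K}$ is Hermitian and invertible (Lemma~\ref{lem:detK}), it defines a nondegenerate Hermitian form on $\C^{N-1}$ whose isometry group is exactly $\U(\mathcal{K})$. Because $c \neq 0$, both $z$ and $w$ span \emph{anisotropic} lines, and the assignment $\lambda z \mapsto \lambda w$ is a $\mathcal{K}$-isometry between the one-dimensional subspaces $\C z$ and $\C w$, since $(\lambda w)^{*}\mathcal{K}(\lambda w) = |\lambda|^{2}c = (\lambda z)^{*}\mathcal{K}(\lambda z)$. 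I would then invoke Witt's extension theorem for nondegenerate Hermitian forms to extend this partial isometry to a full isometry $U \in \U(\mathcal{K})$ with $Uz = w$; as $U$ is invertible and $w \neq 0$, we have $Uz = w \in Z$, so $w \in \U(\mathcal{K})\cdot z$.

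For a self-contained version avoiding a direct appeal to Witt, I would instead diagonalize: by Sylvester's law of inertia pick $P \in \GL(N-1,\C)$ with $P^{*}\mathcal{K}P = \diag(I_{n_{1}}, -I_{n_{2}}) \eqdef J$, where $(n_{1},n_{2})$ is the signature of $\mathcal{K}$ and $n_{1}+n_{2} = N-1$. Conjugation by $P$ identifies $\U(\mathcal{K})$ with $\U(n_{1},n_{2}) = \setdef{V}{V^{*}JV = J}$ and reduces the claim to the transitivity of $\U(n_{1},n_{2})$ on each pseudo-sphere $\setdef{\zeta \in \C^{N-1}}{\zeta^{*}J\zeta = c}$ with $c \neq 0$. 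This transitivity I would obtain by taking the $J$-orthogonal splittings $\C^{N-1} = \C z \oplus (\C z)^{\perp}$ and $\C^{N-1} = \C w \oplus (\C w)^{\perp}$, using Witt cancellation to produce an isometry of the complements $(\C z)^{\perp}$ and $(\C w)^{\perp}$ (nondegenerate of equal dimension and signature), and assembling $U$ as the direct sum of $\lambda z \mapsto \lambda w$ with that isometry.

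The main obstacle is the correct application of Witt's theorem in the sesquilinear (Hermitian) rather than the symmetric-bilinear setting, together with the essential role of the hypothesis $c \neq 0$: it is precisely the anisotropy of $z$ that makes $\C z$ a nondegenerate line and permits the orthogonal splitting and the extension. When $c = 0$ the vector $z$ is isotropic, the splitting $\C^{N-1} = \C z \oplus (\C z)^{\perp}$ breaks down, and $R^{-1}(0)$ need not be a single $\U(\mathcal{K})$-orbit; this is exactly why the statement is restricted to non-zero level sets.
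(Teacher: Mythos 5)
Your proof is correct, but it takes a genuinely different route from the paper's. You reduce transitivity on the pseudo-sphere $\setdef{w}{w^{*}\mathcal{K}w = c}$, $c\neq 0$, to Witt's extension theorem for nondegenerate Hermitian forms: the line $\C z$ is anisotropic, the map $\lambda z \mapsto \lambda w$ is an isometry of lines, and Witt (or, in your self-contained variant, the orthogonal splitting $\C^{N-1} = \C z \oplus (\C z)^{\perp}$ plus Witt cancellation/Sylvester applied to the complements) extends it to some $U \in \U(\mathcal{K})$. The paper instead gives a fully explicit construction: after using Lemma~\ref{lem:detK} and Sylvester's law to identify $\U(\mathcal{K})$ with $\U(n_{1},n_{2})$, it slices the level set as $R^{-1}(c) = \bigcup_{b \ge c} \mathcal{S}_{c}(b)$ where each slice is a product of spheres, acts transitively on each slice with the block subgroup $\U(n_{1})\times\U(n_{2})$, and connects different slices to a base point by an explicit hyperbolic rotation $U_{b}$ built from $\cosh t_{b}$, $\sinh t_{b}$. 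Your argument is shorter and more conceptual but outsources the work to a classical theorem (which does hold in the sesquilinear setting over $\C$, so the appeal is legitimate); the paper's is longer but self-contained and exhibits the group elements explicitly. Both hinge on the same inputs: invertibility of $\mathcal{K}$ and its signature decomposition.

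One small caveat on your closing remark: it is not quite right that $R^{-1}(0)$ ``need not be a single $\U(\mathcal{K})$-orbit'' --- when $n_{1},n_{2}\ge 1$, Witt's theorem also gives transitivity on the nonzero isotropic vectors, so $R^{-1}(0)\cap Z$ \emph{is} a single orbit whenever it is nonempty. What genuinely fails at $c=0$ is the nondegeneracy of the line $\C z$ (hence your splitting argument, and the paper's base point $v=(\sqrt{c},0,\dots,0)$, both degenerate), and the hypothesis $c_{0}\neq 0$ is what the paper actually needs downstream in Theorem~\ref{thm:rotational_reduction}. This does not affect the validity of your proof of the stated lemma.
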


\begin{proof}
  See Appendix~\ref{sec:level_set-R}.
\end{proof}

These results imply that we may identify the Marsden--Weinstein quotient $R^{-1}(c_{0})/\mathbb{S}^{1}$ for $c_{0} \neq 0$ with a coadjoint orbit $\mathcal{O}_{\mu_{0}}$ in $\u(\mathcal{K})^{*} \cong \mathfrak{v}_{\mathcal{K}}^{*}$ equipped with the $(+)$-Kirillov--Kostant--Souriau (KKS) symplectic structure, i.e., for any $\mu \in \mathcal{O}_{\mu_{0}}$ and $\xi, \eta \in \u(\mathcal{K}) \cong \mathfrak{v}_{\mathcal{K}}$,
\begin{equation}
  \label{eq:KKS}
  \Omega_{\mathcal{O}_{\mu_{0}}}(\mu)(-\ad_{\xi}^{*}\mu, -\ad_{\eta}^{*}\mu) \defeq \ip{\mu}{[\xi,\eta]_{\mathcal{K}}},
\end{equation}
where $[\,\cdot\,,\,\cdot\,]_{\mathcal{K}}$ is the Lie bracket on $\mathfrak{v}_{\mathcal{K}}$ defined in \eqref{eq:Lie_bracket-v_K}; see, e.g., \citet[Chapter~1]{Ki2004} and \citet[Chapter~14]{MaRa1999} and references therein.
More specifically, we have the following:
\begin{theorem}[Further reduction by rotational symmetry]
  \label{thm:rotational_reduction}
  Let $z_{0} \in Z\backslash R^{-1}(0)$ and set $c_{0} \defeq R(z_{0}) \neq 0$.
  Then the reduced space by rotational symmetry, i.e., the Marsden--Weinstein quotient $R^{-1}(c_{0})/\mathbb{S}^{1}$, is symplectomorphic to the coadjoint orbit $\mathcal{O}_{\mu_{0}} \subset \mathfrak{v}_{\mathcal{K}}^{*}$ through $\mu_{0} \defeq \mathbf{J}(z_{0}) \in \mathfrak{v}_{\mathcal{K}}^{*}$, i.e., there exists a diffeomorphism $\overline{\mathbf{J}}\colon R^{-1}(c_{0})/\mathbb{S}^{1} \to \mathcal{O}_{\mu_{0}}$ such that the diagram
  \begin{equation*}
    \begin{tikzcd}[column sep=7ex, row sep=7ex]
      Z\backslash R^{-1}(0) & \\
      R^{-1}(c_{0}) \arrow{u}{i_{c_{0}}} \arrow[swap]{d}{\pi_{c_{0}}} \arrow{dr}{\mathbf{J}|_{R^{-1}(c_{0})}} & \\
      R^{-1}(c_{0})/\mathbb{S}^{1} \arrow[swap]{r}{\overline{\mathbf{J}}} & \mathcal{O}_{\mu_{0}}
    \end{tikzcd}
  \end{equation*}
  commutes as well as that $\overline{\mathbf{J}}^{*} \Omega_{\mathcal{O}_{\mu_{0}}} = \Omega_{c_{0}}$, where $\Omega_{\mathcal{O}_{\mu_{0}}}$ is the $(+)$-KKS structure~\eqref{eq:KKS} on $\mathcal{O}_{\mu_{0}}$, and $\Omega_{c_{0}}$ is the reduced symplectic form on $R^{-1}(c_{0})/\mathbb{S}^{1}$, i.e., $i_{c_{0}}^{*} \Omega_{Z} = \pi_{c_{0}}^{*}\Omega_{c_{0}}$ with the inclusion $i_{c_{0}}\colon R^{-1}(c_{0}) \hookrightarrow Z\backslash R^{-1}(0)$ and the quotient map $\pi_{c_{0}}\colon R^{-1}(c_{0}) \to R^{-1}(c_{0})/\mathbb{S}^{1}$.
\end{theorem}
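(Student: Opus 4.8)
The plan is to realize the theorem as the reduction half of the dual pair~\eqref{eq:pre-dual_pair}: Lemmas~\ref{lem:level_set-J} and \ref{lem:level_set-R} already encode that the $\mathbb{S}^{1}$- and $\U(\mathcal{K})$-orbits are mutually the level sets of the opposite momentum map, and from this the symplectomorphism $\overline{\mathbf{J}}$ should fall out almost formally. Before anything else I would record the regularity facts that make all the quotients smooth: the diagonal $\mathbb{S}^{1}$-action~\eqref{eq:Psi} is free on $Z \cong \C^{N-1}\backslash\{0\}$ (if $e^{\rmi\theta}z = z$ with $z \neq 0$, then $e^{\rmi\theta} = 1$), and since $\so(2)$ is one-dimensional with nowhere-vanishing infinitesimal generator on $Z$, the momentum map $R$ of~\eqref{eq:R} is a submersion, so every $c_{0} \neq 0$ is a regular value. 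Hence $R^{-1}(c_{0})/\mathbb{S}^{1}$ is a genuine symplectic manifold carrying the Marsden--Weinstein form $\Omega_{c_{0}}$, while $i_{c_{0}}$ is a closed embedding and $\pi_{c_{0}}$ a surjective submersion.

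\emph{Construction of the bijection.} The momentum map $\mathbf{J}$ is equivariant with coadjoint action $\mu \mapsto U\mu U^{*}$ on $\mathfrak{v}_{\mathcal{K}}^{*}$, as is immediate from $\mathbf{J}(Uz) = \rmi\, Uzz^{*}U^{*} = U\mathbf{J}(z)U^{*}$ using~\eqref{eq:J} (see Appendix~\ref{sec:more_on_U(K)}), so it carries $\U(\mathcal{K})$-orbits into coadjoint orbits. By Lemma~\ref{lem:level_set-R} the fibre $R^{-1}(c_{0})$ is the single orbit $\U(\mathcal{K}) \cdot z_{0}$, whence $\mathbf{J}(R^{-1}(c_{0})) = \mathcal{O}_{\mu_{0}}$ and $\mathbf{J}|_{R^{-1}(c_{0})}$ is an equivariant surjection of homogeneous $\U(\mathcal{K})$-spaces, hence a submersion. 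Because $R$ is $\mathbb{S}^{1}$-invariant, every orbit $\mathbb{S}^{1}\cdot z$ lies in $R^{-1}(R(z))$, so Lemma~\ref{lem:level_set-J} shows that the fibres of $\mathbf{J}|_{R^{-1}(c_{0})}$ are exactly the one-dimensional $\mathbb{S}^{1}$-orbits inside $R^{-1}(c_{0})$. Consequently $\mathbf{J}|_{R^{-1}(c_{0})}$ descends through $\pi_{c_{0}}$ to a smooth bijection $\overline{\mathbf{J}}\colon R^{-1}(c_{0})/\mathbb{S}^{1} \to \mathcal{O}_{\mu_{0}}$ making the diagram commute; matching fibre dimensions gives $\dim\mathcal{O}_{\mu_{0}} = \dim\bigl(R^{-1}(c_{0})/\mathbb{S}^{1}\bigr)$, so $\overline{\mathbf{J}}$ is a submersion between manifolds of equal dimension and hence a diffeomorphism.

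\emph{The symplectic identity.} This is the substantive step. The key observation is that, since $R^{-1}(c_{0}) = \U(\mathcal{K})\cdot z_{0}$ is a single orbit, the tangent space $T_{z}R^{-1}(c_{0})$ is spanned entirely by infinitesimal generators $\xi_{Z}(z) = \mathcal{K}^{-1}\xi z$ with $\xi \in \mathfrak{v}_{\mathcal{K}}$, so it suffices to compare the two-forms on such pairs. On one side, the equivariant-momentum-map identity gives $\Omega_{Z}(\xi_{Z},\eta_{Z}) = \ip{\mathbf{J}(z)}{[\xi,\eta]_{\mathcal{K}}}$. On the other, equivariance yields $T_{z}\mathbf{J}(\xi_{Z}(z)) = -\ad_{\xi}^{*}\mathbf{J}(z)$, so by the defining relation~\eqref{eq:KKS} of the $(+)$-KKS form, $(\mathbf{J}^{*}\Omega_{\mathcal{O}_{\mu_{0}}})(\xi_{Z},\eta_{Z}) = \Omega_{\mathcal{O}_{\mu_{0}}}(\mathbf{J}(z))(-\ad_{\xi}^{*}\mathbf{J}(z),-\ad_{\eta}^{*}\mathbf{J}(z)) = \ip{\mathbf{J}(z)}{[\xi,\eta]_{\mathcal{K}}}$. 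The two agree, so $i_{c_{0}}^{*}\mathbf{J}^{*}\Omega_{\mathcal{O}_{\mu_{0}}} = i_{c_{0}}^{*}\Omega_{Z} = \pi_{c_{0}}^{*}\Omega_{c_{0}}$. Finally, $\mathbf{J}\circ i_{c_{0}} = \overline{\mathbf{J}}\circ\pi_{c_{0}}$ rewrites the left-hand side as $\pi_{c_{0}}^{*}\overline{\mathbf{J}}^{*}\Omega_{\mathcal{O}_{\mu_{0}}}$, and injectivity of $\pi_{c_{0}}^{*}$ (as $\pi_{c_{0}}$ is a surjective submersion) gives $\overline{\mathbf{J}}^{*}\Omega_{\mathcal{O}_{\mu_{0}}} = \Omega_{c_{0}}$.

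The main obstacle I anticipate is bookkeeping the sign and normalization conventions so that the two computations above produce the same pairing $\ip{\mathbf{J}(z)}{[\xi,\eta]_{\mathcal{K}}}$: one must fix the equivariance convention $\mathbf{J}(g\cdot z) = \Ad_{g^{-1}}^{*}\mathbf{J}(z)$ to be compatible with the $(+)$-KKS sign in~\eqref{eq:KKS}, and verify the momentum-map identity $\Omega_{Z}(\xi_{Z},\eta_{Z}) = \ip{\mathbf{J}(z)}{[\xi,\eta]_{\mathcal{K}}}$ with the nonstandard bracket~\eqref{eq:Lie_bracket-v_K} on $\mathfrak{v}_{\mathcal{K}}$ rather than the ambient $\u(N-1)$ bracket. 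This is precisely where the bracket $[\,\cdot\,,\,\cdot\,]_{\mathcal{K}}$, the generator $\xi_{Z}(z) = \mathcal{K}^{-1}\xi z$, and the inner product $\tfrac{1}{2}\tr(\xi^{*}\eta)$ must be threaded through consistently; the remaining set-theoretic and smoothness claims are routine consequences of the two lemmas together with the freeness and regular-value observations of the first paragraph.
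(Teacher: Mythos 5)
Your proposal is correct, and it reaches the theorem by a genuinely different route than the paper. The paper's proof is essentially a citation: after restricting to $Z\backslash R^{-1}(0)$, it verifies the four hypotheses of a general dual-pair reduction theorem (\citet[Theorem 2.9(iii)]{BaWu2012}, see also \citet[Proposition~3.5]{Sk2018}) --- the actions commute, both are canonical, both momentum maps are equivariant, and each level set of one momentum map is an orbit of the other group (Lemmas~\ref{lem:level_set-J} and \ref{lem:level_set-R}) --- and then invokes that theorem to produce $\overline{\mathbf{J}}$ and the identity $\overline{\mathbf{J}}^{*}\Omega_{\mathcal{O}_{\mu_{0}}} = \Omega_{c_{0}}$. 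You instead unpack what that black box does in this special case: you build $\overline{\mathbf{J}}$ by hand (Lemma~\ref{lem:level_set-R} makes $R^{-1}(c_{0})$ a single $\U(\mathcal{K})$-orbit so $\mathbf{J}$ maps it onto $\mathcal{O}_{\mu_{0}}$; Lemma~\ref{lem:level_set-J} plus $\mathbb{S}^{1}$-invariance of $R$ identifies the fibres with $\mathbb{S}^{1}$-orbits; a dimension count upgrades the induced bijection to a diffeomorphism), and you verify the symplectic identity by comparing $\Omega_{Z}(\xi_{Z},\eta_{Z}) = \ip{\mathbf{J}(z)}{[\xi,\eta]_{\mathcal{K}}}$ with the pullback of the $(+)$-KKS form, using that $T_{z}R^{-1}(c_{0})$ is spanned by the generators $\xi_{Z}(z)$. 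Both computations are sound (the first is the standard identity $J^{[\xi,\eta]} = \Omega(\xi_{M},\eta_{M})$ for the exact, invariant one-form $\Theta_{Z}$; the second is equivariance $T_{z}\mathbf{J}\cdot\xi_{Z}(z) = -\ad_{\xi}^{*}\mathbf{J}(z)$ fed into \eqref{eq:KKS}), and you correctly flag that the whole argument hinges on threading the nonstandard bracket $[\,\cdot\,,\,\cdot\,]_{\mathcal{K}}$ and generator $\xi_{Z}(z)=\mathcal{K}^{-1}\xi z$ consistently through both sides. What your approach buys is a self-contained proof that makes the mechanism of the dual pair visible; what the paper's approach buys is brevity and the explicit observation that the hypotheses checked are exactly those of a reusable general theorem (and, with freeness, that $(R,\mathbf{J})$ is a genuine dual pair on $Z\backslash R^{-1}(0)$, as noted in the remark following the theorem). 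The only cosmetic point worth tightening is that $\mathbf{J}^{*}\Omega_{\mathcal{O}_{\mu_{0}}}$ should be read as $(\mathbf{J}|_{R^{-1}(c_{0})})^{*}\Omega_{\mathcal{O}_{\mu_{0}}}$, since $\mathbf{J}$ on all of $Z\backslash R^{-1}(0)$ does not take values in the single orbit $\mathcal{O}_{\mu_{0}}$.
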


\begin{proof}
  The left half of the diagram and the relationship $i_{c_{0}}^{*} \Omega_{Z} = \pi_{c_{0}}^{*}\Omega_{c_{0}}$ are from the symplectic reduction of \citet{MaWe1974} (see also \cite[Sections~1.1 and 1.2]{MaMiOrPeRa2007}).
  
  The existence of the symplectomorphism $\bar{\mathbf{J}}$ and the commutativity of the triangle in the diagram follow from \citet[Theorem 2.9~(iii)]{BaWu2012} (see also \citet[Proposition~3.5]{Sk2018}) under the following conditions:
  (i)~The $\mathbb{S}^{1}$-action~$\Psi$ and the $\U(\mathcal{K})$-action $\Phi$ commute, (ii)~$\Psi$ and $\Phi$ are canonical actions in the sense that $\Psi^{*}\Omega_{Z} = \Omega_{Z}$ and $\Phi^{*}\Omega_{Z} = \Omega_{Z}$, (iii)~the momentum maps $R$ and $\mathbf{J}$ are equivariant, and (iv)~each level set of $\mathbf{J}$ is an $\mathbb{S}^{1}$-orbit, and each level set of $R$ is a $\U(\mathcal{K})$-orbit.

  Note that, due to the result of Lemma~\ref{lem:level_set-R}, we first restrict the definitions of the actions $\Psi$ and $\Phi$ and the momentum maps $R$ and $\mathbf{J}$ to the open subset $Z\backslash R^{-1}(0)$; we do not change the notation to avoid unnecessary complications.
  Then, (i) and (ii) are clear from the definitions \eqref{eq:Psi} and \eqref{eq:Phi} of $\Psi$ and $\Phi$ as well as that of the symplectic form $\Omega_{Z}$ in Proposition~\ref{prop:Omega_Z}; (iii) is also clear from the definitions \eqref{eq:R} and \eqref{eq:J} of the momentum maps; (iv) follows from Lemmas~\ref{lem:level_set-J} and \ref{lem:level_set-R} from above.
\end{proof}

\begin{remark}
  Clearly, both $\Psi$ and $\Phi$ are free; note that $Z \defeq \C^{N-1} \backslash \{0\}$.
  Then the conditions we checked above implies (see \citet[Proposition~3.7]{Sk2018}) that the momentum maps $R$ and $\mathbf{J}$ form a dual pair on $Z\backslash R^{-1}(0)$ in the sense of \citet{We1983} (see also \citet[Chapter~11]{OrRa2004}), i.e., the pair of Poisson maps~\eqref{eq:pre-dual_pair} satisfies $(\ker T_{z}R)^{\Omega_{Z}} = \ker T_{z}\mathbf{J}$ for any $z \in Z\backslash R^{-1}(0)$.
\end{remark}

\subsection{Lie--Poisson Equation for Reduced Dynamics}
Theorem~\ref{thm:rotational_reduction} implies that the dynamics of $N$ point vortices with non-zero circulations defined by \eqref{eq:N-point_vortices} is reduced to a Lie--Poisson equation on $\u(\mathcal{K})^{*} \cong \mathfrak{v}_{\mathcal{K}}^{*}$.
More specifically, we have the following:
\begin{corollary}[Reduced dynamics of $N$ point vortices]
  \label{cor:reduced_dynamics}
  Consider the dynamics of $N$ point vortices with non-zero circulations $\{ \Gamma_{j} \in \R\backslash\{0\} \}_{j=1}^{N}$ defined by \eqref{eq:N-point_vortices}.
  Suppose that the total circulation is non-zero, i.e., $\Gamma\defeq \sum_{j=1}^{N}\Gamma_{j} \neq 0$, and let $\mathbf{q}(0) \in \C^{N}$ be the initial condition for \eqref{eq:N-point_vortices}, $z_{0} \in Z$ be the corresponding element defined by \eqref{eq:z}, and $\mu_{0} \defeq \mathbf{J}(z_{0})$.
  If $R(z_{0}) \neq 0$ (i.e., the angular impulse is non-zero), then:
  \begin{enumerate}[(i)]
  \item The $\SE(2)$-reduced dynamics in the coadjoint orbit $\mathcal{O}_{\mu_{0}}$ is described by $\mu = \mathbf{J}(z)$ satisfying the Lie--Poisson equation
    \begin{equation}
      \label{eq:Lie-Poisson}
      \dot{\mu} = -\ad_{\delta h/\delta\mu}^{*}\mu,
    \end{equation}
    where $h\colon \mathfrak{v}_{\mathcal{K}}^{*} \to \R$ is a collective Hamiltonian, i.e., $H_{Z} = h \circ \mathbf{J}$.
  \item In addition to the Hamiltonian $h$, the Casimirs $\setdef{C_{j}\colon \mathfrak{v}_{\mathcal{K}}^{*} \to \R}{ j \in \N }$ defined by
    \begin{equation}
      \label{eq:Casimirs}
      C_{j}(\mu) \defeq \tr((\rmi\,\mathcal{K} \mu)^{j})
    \end{equation}
    are conserved in the reduced dynamics.
  \end{enumerate}
\end{corollary}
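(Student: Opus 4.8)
The plan is to deduce part~(i) from the theory of collective Hamiltonians together with Theorem~\ref{thm:rotational_reduction}, and to establish part~(ii) by a direct coadjoint-invariance computation. Both parts rest on the identifications already set up, so the work is mostly in checking that the reduced Hamiltonian factors through $\mathbf{J}$ and that the $C_j$ are genuine Casimirs.

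For part~(i), the first task is to verify that the reduced Hamiltonian $H_Z$ is genuinely collective, i.e.\ that it factors as $H_Z = h\circ\mathbf{J}$. I would argue as follows: the Hamiltonian $H$ in \eqref{eq:H} depends only on the inter-vortex distances $l_{jk} = |q_j - q_k|$, and after the $\R^2$-reduction these become functions of $z$ through $l_{jN}^2 = |z_j|^2$ and $l_{jk}^2 = |z_j - z_k|^2$. Each such quantity is recoverable from the entries of $\mu = \mathbf{J}(z) = \rmi z z^*$, since $|z_j|^2 = -\rmi\,\mu_{jj}$ and $z_j z_k^* = -\rmi\,\mu_{jk}$. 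Hence $H_Z$ is constant on the fibers of $\mathbf{J}$; by Lemma~\ref{lem:level_set-J} these fibers are exactly the $\mathbb{S}^1$-orbits, so $H_Z$ descends to a well-defined smooth function $h$ on $\mathbf{J}(Z)\subset\mathfrak{v}_{\mathcal{K}}^*$, which extends to a smooth function on a neighborhood of $\mathcal{O}_{\mu_0}$. Since $\mathbf{J}$ is an equivariant momentum map (checked in the proof of Theorem~\ref{thm:rotational_reduction}), it is a Poisson map into $\mathfrak{v}_{\mathcal{K}}^*$ equipped with the $(+)$-Lie--Poisson structure; composing with the collective Hamiltonian $h$ then forces $\mu = \mathbf{J}(z(t))$ to satisfy \eqref{eq:Lie-Poisson}. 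That the trajectory remains in $\mathcal{O}_{\mu_0}$ is immediate because Lie--Poisson flows preserve symplectic leaves, consistent with the identification of the reduced space with $\mathcal{O}_{\mu_0}$ in Theorem~\ref{thm:rotational_reduction}.

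For part~(ii), the key observation is that the coadjoint action induced through $\mathbf{J}$ is $\mu\mapsto U\mu U^*$ for $U\in\U(\mathcal{K})$, since $z\mapsto Uz$ sends $\mu = \rmi z z^*$ to $\rmi(Uz)(Uz)^* = U\mu U^*$. I would then exploit the defining relation $U^*\mathcal{K} U = \mathcal{K}$, rewritten as $\mathcal{K} U = (U^*)^{-1}\mathcal{K}$, to compute
\begin{equation*}
  \mathcal{K}\,U\mu U^* = (U^*)^{-1}(\mathcal{K}\mu)\,U^*,
\end{equation*}
so that $\rmi\mathcal{K}\,U\mu U^*$ is conjugate to $\rmi\mathcal{K}\mu$. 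Taking $j$-th powers and traces gives $C_j(U\mu U^*) = C_j(\mu)$, i.e.\ each $C_j$ is invariant under the coadjoint action. Since a function constant on coadjoint orbits is a Casimir of the $(+)$-Lie--Poisson structure, and Casimirs Poisson-commute with every Hamiltonian, the $C_j$ are conserved along the reduced dynamics \eqref{eq:Lie-Poisson}.

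I expect the main obstacle to lie in the rigorous justification of the collective structure in part~(i)---specifically, producing an orbit-wise well-defined smooth function $h$ with $H_Z = h\circ\mathbf{J}$ rather than merely a function on the image set, and confirming the sign convention so that the $(+)$-KKS structure from Theorem~\ref{thm:rotational_reduction} yields exactly \eqref{eq:Lie-Poisson}. Part~(ii) is essentially a one-line similarity argument once the coadjoint action is identified, though one should take care that invariance under the identity component of $\U(\mathcal{K})$ is what is needed, which the connectedness of the relevant orbit supplies.
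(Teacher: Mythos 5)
Your proposal is correct and follows essentially the same route as the paper: part~(i) is obtained from Theorem~\ref{thm:rotational_reduction} via the momentum map $\mathbf{J}$ (the paper simply calls it a ``direct consequence,'' while you spell out the collective-Hamiltonian argument, including the factorization $H_{Z} = h\circ\mathbf{J}$ through the entries of $\mu = \rmi z z^{*}$), and part~(ii) is the $\Ad^{*}$-invariance of the $C_{j}$ under $\mu \mapsto U\mu U^{*}$, which the paper states is ``verified easily'' and you verify explicitly with the similarity $\rmi\mathcal{K}\,U\mu U^{*} = (U^{*})^{-1}(\rmi\mathcal{K}\mu)U^{*}$. Both of your computations check out against the paper's conventions (in particular the coadjoint action \eqref{eq:Adstar-v_K} and the $(+)$-Lie--Poisson bracket), so there is nothing substantive to add.
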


\begin{proof}
  (i)~It is a direct consequence of Theorem~\ref{thm:rotational_reduction}.
  (ii)~Clearly the Hamiltonian $h$ is conserved.
  As for the Casimirs, note first that the expression~\eqref{eq:Adstar-v_K} for the coadjoint action of $\U(\mathcal{K})$ on $\mathfrak{v}_{\mathcal{K}}^{*}$ suggests that the functions $\{ C_{j} \}_{j\in \N}$ are all $\Ad^{*}$-invariant, i.e., $C_{j}(\Ad_{U^{-1}}^{*} \mu) = C_{j}(\mu)$ for any $\mu \in \mathfrak{v}_{\mathcal{K}}^{*}$ as verified easily.
  Since any $\Ad^{*}$-invariant differentiable function is a Casimir of the $(+)$-Lie--Poisson bracket (see, e.g., \cite[Corollary~14.4.3]{MaRa1999})
  \begin{equation*}
    \PB{f}{h}(\mu) \defeq \ip{\mu}{ \brackets{ \frac{\delta f}{\delta\mu}, \frac{\delta h}{\delta\mu} }_{\mathcal{K}} }
  \end{equation*}
  on $\mathfrak{v}_{\mathcal{K}}^{*}$, we conclude that $\{ C_{j} \}_{j\in \N}$ are Casimirs and hence are conserved quantities of the reduced dynamics.
\end{proof}

\begin{remark}
  A more concrete expression for the Lie--Poisson equation~\eqref{eq:Lie-Poisson} is, using \eqref{eq:adstar-v_K},
  \begin{equation}
    \label{eq:Lie-Poisson-matrix_form}
    \dot{\mu} = -\ad_{\delta h/\delta\mu}^{*}\mu = -\mu \frac{\delta h}{\delta\mu} \mathcal{K}^{-1} + \mathcal{K}^{-1} \frac{\delta h}{\delta\mu} \mu,
  \end{equation}
  where the derivative ${\delta h}/{\delta\mu} \in \mathfrak{v}_{\mathcal{K}}$ is defined so that, for any $\mu,\nu \in \mathfrak{v}_{\mathcal{K}}^{*}$,
  \begin{equation*}
    \ip{\nu}{\frac{\delta h}{\delta\mu}}
    = \frac{1}{2}\tr\parentheses{ \nu^{*} \frac{\delta h}{\delta\mu} }
    = \left.\od{}{s}\right|_{s=0} h(\mu + s\nu).
  \end{equation*}
\end{remark}

\begin{remark}
  \label{rem:C_1-R}
  The Casimir $C_{1}$ is essentially the angular impulse $R$.
  In fact, we have
  \begin{equation*}
    C_{1} \circ \mathbf{J}(z) = \tr(-\mathcal{K} z z^{*}) = -z^{*} \mathcal{K} z = 2 R(z).
  \end{equation*}
\end{remark}

\begin{remark}
  As mentioned in the beginning of the section, the results of both Theorem~\ref{thm:rotational_reduction} and Corollary~\ref{cor:reduced_dynamics} apply to the case with vanishing total circulation by replacing $N$ by $N-1$ and $\mathcal{K}$ by $\mathcal{K}_{0}$.
\end{remark}

\section{Back to the Examples}
Now we would like to apply the above results to the motivating examples from Section~\ref{ssec:motivating_examples}.
We show that the shape dynamics in these examples are indeed periodic exploiting the Lie--Poisson formulation and the Casimirs found above.

\subsection{\boldmath $N = 3$ with $\Gamma \neq 0$}
\label{ssec:N=3}
  We may write the elements in $\mathfrak{v}_{\mathcal{K}}$ as
  \begin{equation*}
    \mathfrak{v}_{\mathcal{K}} = \setdef{
      \rmi
      \begin{bmatrix}
        \mu_{2} & \mu_{3} + \rmi\,\mu_{4} \\
        \mu_{3} - \rmi\,\mu_{4} & \mu_{1}
      \end{bmatrix}
    }{ \mu_{1}, \mu_{2}, \mu_{3}, \mu_{4} \in \R },
  \end{equation*}
  which can be identified with $\R^{4} = \{ (\mu_{1}, \mu_{2}, \mu_{3}, \mu_{4}) \}$.
  By setting $\mu = \mathbf{J}(z)$, we have
  \begin{gather*}
    \mu_{1} = |z_{2}|^{2} = | q_{2} - q_{3} |^{2},
    \qquad
    \mu_{2} = |z_{1}|^{2} = | q_{1} - q_{3} |^{2},
    \\
    \mu_{3} + \rmi\,\mu_{4} = z_{1} z_{2}^{*} = ( q_{1} - q_{3} )( q_{2}^{*} - q_{3}^{*} ).
  \end{gather*}
  The derivative $\delta h/\delta\mu$ is then
  \begin{equation*}
    \frac{\delta h}{\delta\mu}
    = \rmi
    \begin{bmatrix}
      2 \tpd{h}{\mu_{2}} & \tpd{h}{\mu_{3}} + \rmi\,\tpd{h}{\mu_{4}} \\
      \tpd{h}{\mu_{3}} - \rmi\,\tpd{h}{\mu_{4}} & 2 \tpd{h}{\mu_{1}}
    \end{bmatrix}
    = \parentheses{
      2\pd{h}{\mu_{1}}, 2\pd{h}{\mu_{2}}, \pd{h}{\mu_{3}}, \pd{h}{\mu_{4}}
    }.
  \end{equation*}
  We define the collective Hamiltonian $h$ as
  \begin{equation*}
    h(\mu) \defeq -\frac{1}{4\pi} \parentheses{
      \Gamma_{1} \Gamma_{2} \ln(\mu_{1} + \mu_{2} - 2\mu_{3})
      + \Gamma_{2} \Gamma_{3} \ln \mu_{1}
      + \Gamma_{3} \Gamma_{1} \ln \mu_{2}
    }.
  \end{equation*}

  The Lie--Poisson equation~\eqref{eq:Lie-Poisson} or \eqref{eq:Lie-Poisson-matrix_form} then gives
  \begin{equation*}
    \begin{array}{c}
      \DS\dot{\mu}_{1} = \frac{\Gamma_{1}}{\pi}\,f_{1}(\mu)\,\mu_{4},
      \qquad
      \dot{\mu}_{2} = \frac{\Gamma_{2}}{\pi}\,f_{2}(\mu)\,\mu_{4},
      \qquad
      \DS\dot{\mu}_{3} = \frac{1}{2 \pi} \left( \sum_{j=1}^{3} \Gamma_{j} f_{j}(\mu) \right) \mu_{4},
      \medskip\\
      \DS\dot{\mu}_{4} = -\frac{1}{2 \pi } \left(
      \Gamma_{1} f_{1}(\mu)(\mu_{3}-\mu_{2}) 
        + \Gamma_{2} f_{2}(\mu)(\mu_{3}-\mu_{1}) 
      + \Gamma_{3} f_{3}(\mu)\mu_{3}
        \right),
    \end{array}
  \end{equation*}
  where
  \begin{equation*}
    f_{1}(\mu) \defeq \frac{1}{\mu_{1}+\mu_{2}-2 \mu_{3}}-\frac{1}{\mu_{2}},
    \qquad
    f_{2}(\mu) \defeq \frac{1}{\mu_{1}}-\frac{1}{\mu_{1}+\mu_{2}-2 \mu_{3}},
    \qquad
    f_{3}(\mu) \defeq \frac{1}{\mu_{1}}-\frac{1}{\mu_{2}}.
  \end{equation*}

  The linear Casimir $C_{1}$ (essentially the angular impulse $R$; see Remark~\ref{rem:C_1-R}) is written in terms of $\mu$ as follows:
  \begin{equation*}
    C_{1}(\mu) = \frac{\Gamma_{2} (\Gamma_{1}+\Gamma_{3}) \mu_{1} + \Gamma_{1} (\Gamma_{2}+\Gamma_{3}) \mu_{2} -2 \Gamma_{1} \Gamma_{2} \mu_{3}}{\Gamma_{1}+\Gamma_{2}+\Gamma_{3}}.
  \end{equation*}
  It is easy to see that the three conserved quantities---the Hamiltonian $h$, the linear and quadratic Casimirs $C_{1}$ and $C_{2}$ (see \eqref{eq:Casimirs})---are independent.

  The variables $\mu = (\mu_{1}, \mu_{2}, \mu_{3}, \mu_{4})$ are related to the inter-vortex distance $l_{ij} \defeq |q_{i} - q_{j}|$ and the signed area $A$ of the triangle introduced in Section~\ref{ssec:motivating_examples} as follows:
  \begin{align*}
    (l_{23}^{2}, l_{31}^{2}, l_{12}^{2}, A)
    &= \parentheses{ \mu_{1}, \mu_{2}, \mu_{1}+\mu_{2}-2 \mu_{3}, -\frac{1}{2}\mu_{4}} \nonumber\\
    &= \parentheses{
      | q_{2} - q_{3} |^{2},
      | q_{1} - q_{3} |^{2},
      | q_{1} - q_{2} |^{2},
      -\frac{1}{2}\Im( (q_{1} - q_{3})(q_{2}^{*} - q_{3}^{*}))
      }.
  \end{align*}
  Rewriting the the Lie--Poisson equation and the Casimir $C_{2}$ in the new variables, we obtain the equations~\eqref{eq:relative_motion} of relative motion as well as the expression~\eqref{eq:C_2} for the Casimir $C_{2}$.

  Since $C_{1}$ is linear in $(\mu_{1},\mu_{2},\mu_{3})$, its level set $C_{1}^{-1}(2c_{0})$ defines an affine subspace of codimension 1 in $\mathfrak{v}(\mathcal{K})$; hence we may parametrize the level set of $C_{1}^{-1}(2c_{0})$ by $(\mu_{1},\mu_{2},\mu_{4})$.
  One may then restrict the collective Hamiltonian $h$ and the quadratic Casimir $C_{2}$ in this affine subspace.
  Then the Lie--Poisson dynamics is in the intersection of the level sets of $h$ and $C_{2}$ in the affine subspace $C_{1}^{-1}(2c_{0})$.
  Figure~\ref{fig:Lie-Poisson}~(a) shows the Lie--Poisson shape dynamics of three point vortices with the parameters and initial condition specified in \eqref{eq:3PVs}.
  The shape dynamics is in the one-dimensional manifold defined as the intersection of the Casimir $C_{2}$ and the Hamiltonian.
  This confirms the periodicity of the shape dynamics alluded in Section~\ref{ssec:motivating_examples}.
  
  \begin{figure}[htbp]
    \centering
    \subfigure[$N = 3$ with \eqref{eq:3PVs}; see Section~\ref{ssec:N=3}.
    The level set of the quadratic Casimir $C_{2}$ (green) defines an ellipsoid.]{
      \includegraphics[width=.465\linewidth]{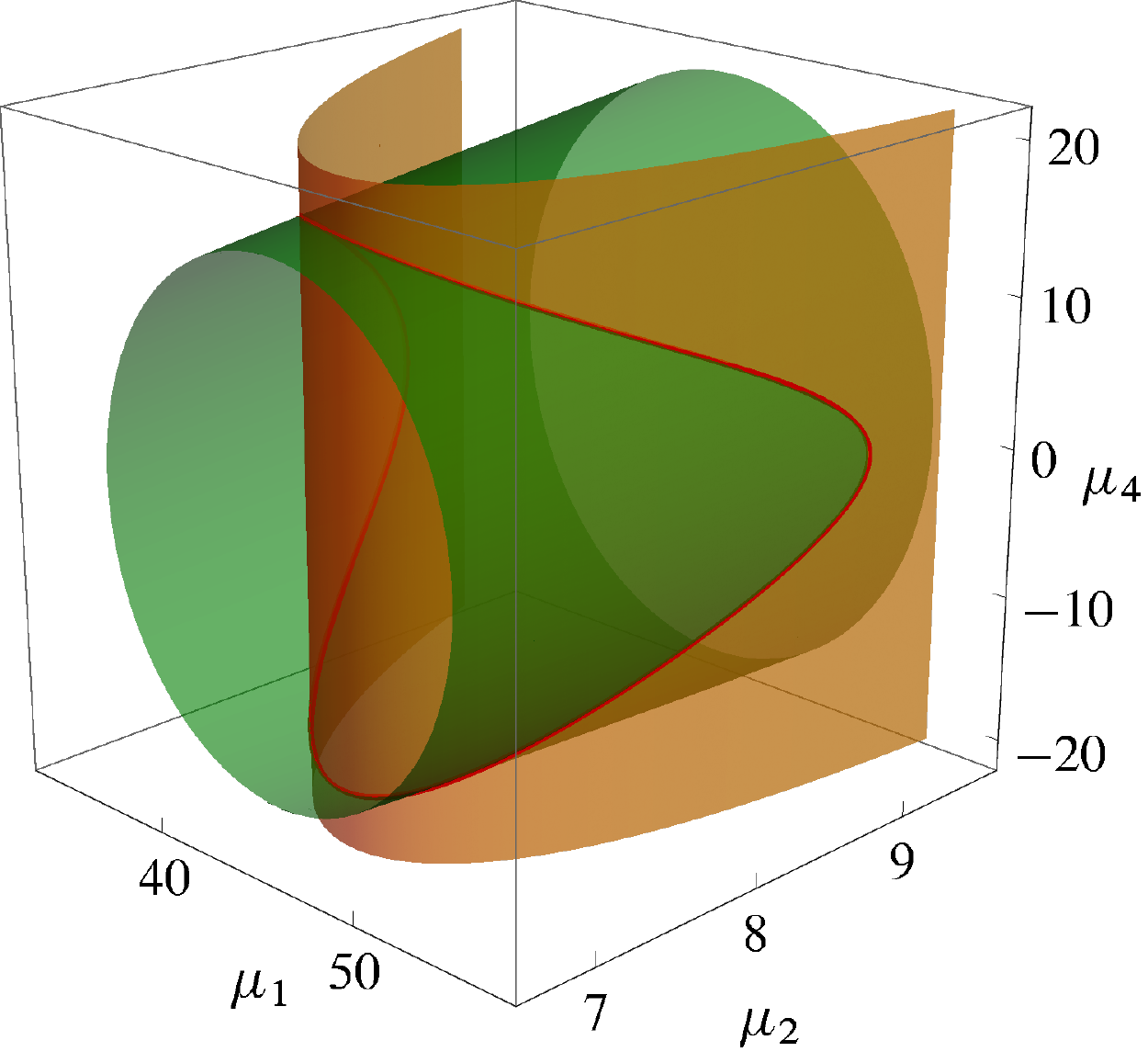}
    }
    \quad
    \subfigure[$N = 4$ with \eqref{eq:4PVs}; see Section~\ref{ssec:N=4}.
    The level set of the quadratic Casimir $C_{2}$ (green) defines a paraboloid.]{
      \includegraphics[width=.465\linewidth]{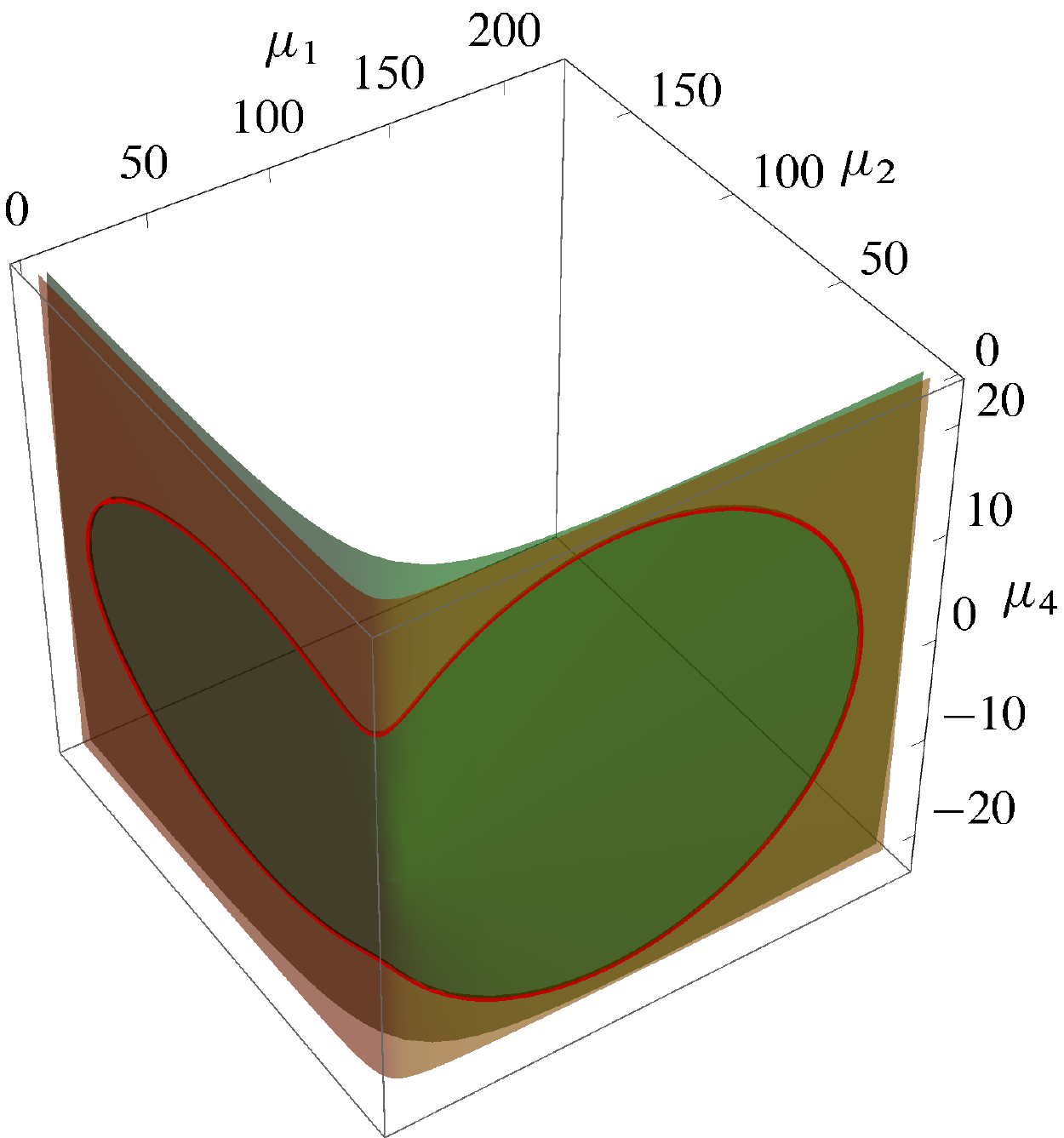}
    }
    \captionsetup{width=0.95\textwidth}
    \caption{
      Lie--Poisson dynamics shape dynamics (red) of point vortices for (a)~$N = 3$ with \eqref{eq:3PVs} and (b)~$N = 4$ with \eqref{eq:4PVs} from Section~\ref{ssec:motivating_examples}.
      The level set of the linear Casimir $C_{1}$ defines an affine subspace of $\mathfrak{v}_{\mathcal{K}} = \{(\mu_{1}, \mu_{2}, \mu_{3}, \mu_{4})\}$ with codimension 1, and hence can be parametrized by $(\mu_{1}, \mu_{2}, \mu_{4})$.
      The green and orange surfaces are the level sets of the quadratic Casimir $C_{2}$ and Hamiltonian $h$, respectively, in $\R^{3} = \{(\mu_{1}, \mu_{2}, \mu_{4})\}$.
    }
    \label{fig:Lie-Poisson}
  \end{figure}

\subsection{\boldmath $N = 4$ with $\Gamma = 0$}
\label{ssec:N=4}
Let us next consider the four-vortex case with zero total circulation $\Gamma$.
Just like the general three-vortex case, this is an integrable case as well; see \citet{Ec1988}.

As discussed in Propositions~\ref{prop:translational_reduction} and \ref{prop:Omega_Z_0} (see also Remark~\ref{rem:K-K_0}), the $\R^{2}$-reduced space $Z_{0}$ in this case is $\C^{2} \backslash \{0\}$, and so the Lie algebra $\mathfrak{v}_{\mathcal{K}_{0}}$ is essentially the same as $\mathfrak{v}_{\mathcal{K}}$ from Section~\ref{ssec:N=3} with $N = 3$.
Hence one can formulate the Lie--Poisson dynamics as well as demonstrate the periodicity of the shape dynamics just as in the above example; see Fig.~\ref{fig:Lie-Poisson}~(b).

\section{Conclusion and Outlook}
\subsection{Conclusion}
We applied symplectic reduction by $\SE(2)$ to the Hamiltonian dynamics of $N$ point vortices on the plane, and came up with the Lie--Poisson formulation of the shape dynamics of the vortices.
As stated in the introduction, the approach is similar to that of \citet{BoBoMa1999} and the result is essentially the same, but our work clarifies how the symplectic/Poisson structure of the shape dynamics is inherited by applying symplectic reduction by stages, first by $\R^{2}$ and then by $\SO(2)$.
The second stage uses a dual pair; this facilitates the reduction and naturally gives rise to the Lie--Poisson structure.
We also found a family of Casimirs of the Lie--Poisson structure.
Some of those Casimirs apparently have been overlooked in the existing literature.
The examples provided demonstrate the use of the Lie--Poisson formulation and the Casimirs to show that some shape dynamics are periodic although the trajectories of the vortices on the plane are not.
The non-periodicity of the trajectories is due to the reconstruction phase picked up by the full dynamics when the reduced/shape dynamics undergoes a periodic motion; see \citet{HeSh2018} for more details.
It is an interesting future work to investigate the reconstruction phase using our geometric setting and the Lie--Poisson formulation.

\subsection{Outlook}
As illustrated in the $\SO(2)$-reduction in this paper, a dual pair facilitates a symplectic reduction by deducing that the reduced dynamics is a Lie--Poisson dynamics.
This approach in general is particularly useful if the Marsden--Weinstein quotient turns out to be complicated: One can sidestep the difficulty by formulating the reduced dynamics as as Lie--Poisson equation, which is defined on a vector space.
In other words, a dual pair significantly simplifies the description of the seemingly complicated reduced dynamics; this facilitates numerical computations as well.

In the last few years, there have been some new developments and applications of dual pairs.
\citet{Sk2018} constructed a dual pair to give a different perspective of the realization of the Siegel upper half space as a Marsden--Weinstein quotient by the author~\cite{Oh2015c}; this work was originally motivated by the dynamics of semiclassical wave packets.
Recently, the author~\cite{Ohsawa-SymRep} also used a dual pair constructed by \citet{SkVi2019} to the symmetric representation of the rigid body equation~\cite{BlCrMaRa2002} to show that it is related to the generalized rigid body equation via a symplectic reduction.
Furthermore, the extension of this paper to the dynamics of $N$ point vortices on the sphere is under way~\cite{Ohsawa-PointPortices-Sphere}, and again we use a dual pair constructed by \citet{SkVi2019}.
The same idea may be used to analyze the dynamics of relative configurations of interacting quantum spin systems, because its geometric structure is similar to that of the $N$ point vortices on the sphere.

\appendix
\numberwithin{equation}{section}

\section{More on Lie Group $\U(\mathcal{K})$ and Lie Algebra $\u(\mathcal{K})$}
\label{sec:more_on_U(K)}
\subsection{Coadjoint Action and Casimirs}
The adjoint action $\Ad\colon \U(\mathcal{K}) \times \u(\mathcal{K}) \to \u(\mathcal{K})$ is defined as
\begin{equation*}
  \Ad_{U}\tilde{\eta} \defeq U \tilde{\eta} U^{-1}.
\end{equation*}
Since we identify $\u(\mathcal{K})$ with $\mathfrak{v}_{\mathcal{K}}$ via the map~\eqref{eq:u(K)-v_K}, the corresponding action of $U(\mathcal{K})$ on $\mathfrak{v}_{\mathcal{K}}$ is given by, with an abuse of notation,
\begin{align*} 
  \Ad_{U}\eta
  &\defeq \mathcal{K} \Ad_{U} \tilde{\eta} \\
  &= \mathcal{K} U \mathcal{K}^{-1} \eta \mathcal{K}^{-1} U^{*} \mathcal{K} \\
  &= (U^{-1})^{*} \eta U^{-1},
\end{align*}
where we used the relation $U^{-1} = \mathcal{K}^{-1} U^{*} \mathcal{K}$.
Hence $\Ad_{U^{-1}} \eta = U^{*} \eta U$ and thus we obtain the coadjoint action of $U(\mathcal{K})$ on $\mathfrak{v}_{\mathcal{K}}^{*}$ as follows:
\begin{equation}
  \label{eq:Adstar-v_K}
  \Ad_{U^{-1}}^{*} \mu = U \mu U^{*}.
\end{equation}

\subsection{Coadjoint Representation}
From the above expression of the adjoint action on $\mathfrak{v}_{\mathcal{K}}$, we have the adjoint representation of $\u(\mathcal{K})$ on $\mathfrak{v}_{\mathcal{K}}$ as
\begin{equation*}
  \ad_{\tilde{\xi}} \eta = -\tilde{\xi}^{*} \eta - \eta \tilde{\xi}
\end{equation*}
Again we abuse the notation and define the adjoint representation of $\mathfrak{v}_{\mathcal{K}}$ on itself as
\begin{align*}
  \ad_{\xi} \eta \defeq \ad_{\tilde{\xi}} \eta
  &= \xi \mathcal{K}^{-1} \eta - \eta \mathcal{K}^{-1} \xi,
\end{align*}
which coincides with the Lie bracket~\eqref{eq:Lie_bracket-v_K} on $\mathfrak{v}_{\mathcal{K}}$.
As a result, we obtain the coadjoint representation of $\mathfrak{v}_{\mathcal{K}}$ on $\mathfrak{v}_{\mathcal{K}}^{*}$ as follows:
\begin{equation}
  \label{eq:adstar-v_K}
  \ad_{\xi}^{*}\mu = \mu \xi \mathcal{K}^{-1} - \mathcal{K}^{-1} \xi \mu.
\end{equation}

\section{Proof of Lemma~\ref{lem:level_set-R}}
\label{sec:level_set-R}
\begin{lemma}
  \label{lem:detK}
  The determinant of the matrix $\mathcal{K}$ defined in \eqref{eq:K} is given by
  \begin{equation*}
    \det \mathcal{K} = \frac{(-1)^{N-1}}{\Gamma} \prod_{j=1}^{N} \Gamma_{j} = (-1)^{N-1} \frac{\Gamma_{1} \cdots \Gamma_{N}}{\Gamma_{1} + \dots + \Gamma_{N}}.
  \end{equation*}
\end{lemma}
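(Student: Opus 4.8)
The plan is to compute $\det\mathcal{K}$ by exploiting the rank-one-plus-diagonal structure of the matrix. Writing $\mathcal{K} = \frac{1}{\Gamma}M$, I observe that the off-diagonal entry in position $(j,k)$ is $\Gamma_j\Gamma_k$, and the diagonal entry is $-\Gamma_j(\Gamma-\Gamma_j) = \Gamma_j\Gamma_j - \Gamma_j\Gamma = \Gamma_j^2 - \Gamma_j\Gamma$. Thus every entry of $M$ equals $\Gamma_j\Gamma_k$ except that each diagonal entry has an additional $-\Gamma_j\Gamma$ term. This means I can write
\begin{equation*}
  M = \mathbf{g}\mathbf{g}^{\top} - \Gamma D,
\end{equation*}
where $\mathbf{g} \defeq (\Gamma_1, \dots, \Gamma_{N-1})^{\top}$ and $D \defeq \diag(\Gamma_1, \dots, \Gamma_{N-1})$. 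So $\det\mathcal{K} = \Gamma^{-(N-1)}\det(\mathbf{g}\mathbf{g}^{\top} - \Gamma D)$, and the problem reduces to evaluating the determinant of a diagonal matrix plus a rank-one perturbation.

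The key step is the matrix determinant lemma: for invertible $A$ and vectors $\mathbf{u}, \mathbf{v}$, one has $\det(A + \mathbf{u}\mathbf{v}^{\top}) = \det(A)\,(1 + \mathbf{v}^{\top}A^{-1}\mathbf{u})$. Applying this with $A = -\Gamma D$, $\mathbf{u} = \mathbf{v} = \mathbf{g}$ gives
\begin{equation*}
  \det(\mathbf{g}\mathbf{g}^{\top} - \Gamma D) = \det(-\Gamma D)\parentheses{ 1 - \frac{1}{\Gamma}\,\mathbf{g}^{\top}D^{-1}\mathbf{g} }.
\end{equation*}
Here $\det(-\Gamma D) = (-\Gamma)^{N-1}\prod_{j=1}^{N-1}\Gamma_j$, and the scalar correction is $\mathbf{g}^{\top}D^{-1}\mathbf{g} = \sum_{j=1}^{N-1}\Gamma_j^2/\Gamma_j = \sum_{j=1}^{N-1}\Gamma_j = \Gamma - \Gamma_N$. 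Hence the parenthesized factor becomes $1 - (\Gamma - \Gamma_N)/\Gamma = \Gamma_N/\Gamma$.

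Assembling the pieces, I get
\begin{equation*}
  \det\mathcal{K} = \frac{1}{\Gamma^{N-1}}\cdot(-\Gamma)^{N-1}\parentheses{\prod_{j=1}^{N-1}\Gamma_j}\cdot\frac{\Gamma_N}{\Gamma} = \frac{(-1)^{N-1}}{\Gamma}\prod_{j=1}^{N}\Gamma_j,
\end{equation*}
which is exactly the claimed formula. I do not anticipate a genuine obstacle here, since the structure is transparent once $M$ is decomposed as diagonal-plus-rank-one; the only point requiring mild care is the bookkeeping of signs and the fact that $\mathbf{g}^{\top}D^{-1}\mathbf{g}$ telescopes to $\Gamma - \Gamma_N$ rather than to $\Gamma$, which is what produces the crucial $\Gamma_N$ in the numerator and ultimately the full product $\prod_{j=1}^{N}\Gamma_j$. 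An alternative route, should one prefer to avoid citing the determinant lemma, would be to perform row/column reduction directly or to evaluate the determinant via the characteristic-polynomial identity $\det(\lambda I - \mathbf{g}\mathbf{g}^{\top})$-style manipulations, but the rank-one update approach is the cleanest and I would present that.
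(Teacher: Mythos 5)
Your proof is correct and follows essentially the same route as the paper's: both recognize $\mathcal{K}$ (up to the $1/\Gamma$ factor) as a rank-one perturbation of a diagonal matrix and apply the matrix determinant lemma, with the scalar correction $\mathbf{g}^{\top}D^{-1}\mathbf{g} = \Gamma - \Gamma_{N}$ producing the factor $\Gamma_{N}/\Gamma$. The paper merely factors $\Gamma_{j}$ out of each row first, reducing to $\det(\boldsymbol{\Gamma}\mathbf{1}^{T} - \Gamma I)$ before invoking $\det(I + \mathbf{x}\mathbf{y}^{T}) = 1 + \mathbf{x}^{T}\mathbf{y}$, which is the same computation with slightly different bookkeeping.
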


\begin{proof}
  From the expression~\eqref{eq:K} for $\mathcal{K}$, we see that
  \begin{equation*}
    \det \mathcal{K} = \frac{1}{\Gamma^{N-1}}
    \parentheses{ \prod_{j=1}^{N-1} \Gamma_{j} }
    \begin{vmatrix}
      \Gamma_{1} - \Gamma & \Gamma_{1} & \dots & \Gamma_{1} \\
      \Gamma_{2} & \Gamma_{2} - \Gamma & \dots & \Gamma_{2} \\
      \vdots & \vdots & \ddots & \vdots \\
      \Gamma_{N-1} & \Gamma_{N-1} & \dots & \Gamma_{N-1} - \Gamma
    \end{vmatrix}.
  \end{equation*}
  However, setting $\boldsymbol{\Gamma} = (\Gamma_{1}, \dots, \Gamma_{N-1})$ and $\mathbf{1} = (1, \dots, 1)$ in $\R^{N-1}$, the determinant on the right-hand side can be written as
  \begin{align*}
    \begin{vmatrix}
      \Gamma_{1} - \Gamma & \Gamma_{1} & \dots & \Gamma_{1} \\
      \Gamma_{2} & \Gamma_{2} - \Gamma & \dots & \Gamma_{2} \\
      \vdots & \vdots & \ddots & \vdots \\
      \Gamma_{N-1} & \Gamma_{N-1} & \dots & \Gamma_{N-1} - \Gamma
    \end{vmatrix}
                 &= \det\parentheses{ \boldsymbol{\Gamma} \mathbf{1}^{T} -\Gamma I } \\
                              &= (-\Gamma)^{N-1} \det\parentheses{ I - \frac{1}{\Gamma} \boldsymbol{\Gamma} \mathbf{1}^{T} } \\
                              &= (-\Gamma)^{N-1} \parentheses{ 1 - \frac{1}{\Gamma} \boldsymbol{\Gamma}^{T} \mathbf{1} } \\
                              &= (-1)^{N-1} \Gamma^{N-2} \Gamma_{N},
  \end{align*}
  where we used the fact that $\det(I + \mathbf{x}\mathbf{y}^{T}) = 1 + \mathbf{x}^{T}\mathbf{y}$ for any $n \times n$ identity matrix $I$ and any $\mathbf{x}, \mathbf{y} \in \R^{n}$.
\end{proof}

\begin{remark}
  Similarly, we have
  \begin{equation*}
    \det \mathcal{K}_{0} = \frac{(-1)^{N-1}}{\Gamma_{N}} \prod_{j=1}^{N-1} \Gamma_{j} = (-1)^{N} \frac{\Gamma_{1} \cdots \Gamma_{N-1}}{\Gamma_{1} + \dots + \Gamma_{N-1}},
  \end{equation*}
  where $\Gamma = \sum_{j=1}^{N} \Gamma_{j} = 0$ is assumed.
  If follows easily by replacing $N$ by $N-1$ and $\Gamma$ by $-\Gamma_{N}$; see Remark~\ref{rem:K-K_0}.
\end{remark}

\begin{proof}[Proof of Lemma~\ref{lem:level_set-R}]
  It suffices to show that the Lie group $\U(\mathcal{K})$ acts transitively on the level set $R^{-1}(c)$ of the momentum map \eqref{eq:R} for any $c \in \R\backslash\{0\}$ because that implies that $R^{-1}(R(z)) \subset \U(\mathcal{K}) \cdot z$ whereas the other inclusion $\U(\mathcal{K}) \cdot z \subset R^{-1}(R(z))$ is trivial.
  
  By the assumption and the above lemma, we have $\det \mathcal{K} \neq 0$.
  Therefore, the inner product on $\C^{N-1} \supset Z$ defined by
  \begin{equation*}
    \ip{v}{w}_{\mathcal{K}} \defeq v^{*} \mathcal{K} w
  \end{equation*}
  for any $v, w \in \C^{N-1}$ is non-degenerate in the sense that $\ip{v}{w}_{\mathcal{K}} = 0$ for any $w \in \C^{N-1}$ implies that $v = 0$.
  This implies that one can find a basis for $\C^{N-1}$ with respect to which $\mathcal{K}$ is expressed as $\begin{tbmatrix} I_{n_{1}} & 0 \\ 0 & -I_{n_{2}} \end{tbmatrix}$ for some $n_{1}, n_{2} \in \{0, \dots, N-1\}$ such that $n_{1} + n_{2} = N-1$; as a result, one sees that $\U(\mathcal{K})$ is isomorphic to the indefinite unitary group (see, e.g., \citet[Lemma~1.1.7 and Proposition~1.1.8]{GoWa2009})
  \begin{equation*}
    \U(n_{1},n_{2}) \defeq \setdef{ U \in \C^{(N-1)\times(N-1)} }{ U^{*} \begin{tbmatrix}
        I_{n_{1}} & 0 \\
        0 & -I_{n_{2}}
      \end{tbmatrix} U
      = \begin{tbmatrix}
        I_{n_{1}} & 0 \\
        0 & -I_{n_{2}}
      \end{tbmatrix}
    }.
  \end{equation*}
  Then the momentum map $R$ is written as
  \begin{equation*}
    R(z) = \sum_{j=1}^{n_{1}} |z_{j}|^{2} - \sum_{k=1}^{n_{2}} |z_{n_{1}+k}|^{2}
  \end{equation*}
  in terms of the coordinates with respect to this basis.

  Let us consider the level set $R^{-1}(c)$ with $c > 0$.
  The level set may be written as
  \begin{equation*}
    R^{-1}(c) = \bigcup_{b\ge c} \mathcal{S}_{c}(b),
  \end{equation*}
  where
  \begin{equation*}
    \mathcal{S}_{c}(b) \defeq \setdef{ z \in \C^{N-1} }{ \sum_{j=1}^{n_{1}} |z_{j}|^{2} = b,\, \sum_{k=1}^{n_{2}} |z_{n_{1}+k}|^{2} = b - c }.
  \end{equation*}

  Let $b \ge c$ be arbitrary and set $w = (\tilde{w}, \hat{w}) \in \mathcal{S}_{c}(b)$ with $\tilde{w} = (\sqrt{b}, 0, \dots, 0) \in \C^{n_{1}}$ and $\hat{w} = (\sqrt{b - c}, 0, \dots, 0) \in \C^{n_{2}}$.
  Then, given any point $z = (\tilde{z}, \hat{z}) \in \mathcal{S}_{c}(b)$, one sees that $\tilde{z} \in \mathbb{S}^{2n_{1}-1}_{\sqrt{b}} \subset \C^{n_{1}}$ and $\tilde{z} \in \mathbb{S}^{2n_{2}-1}_{\sqrt{b-c}} \subset \C^{n_{2}}$; where $\mathbb{S}_{r}^{n}$ stands for the $n$-sphere with radius $r > 0$ centered at the origin.
  Therefore, one can find $\tilde{W} \in \U(n_{1})$ and $\hat{W} \in \U(n_{2})$ such that $\tilde{z} = \tilde{W} \tilde{w}$ and $\hat{z} = \hat{W} \hat{w}$.
  Then, setting $W \defeq
  \begin{tbmatrix}
    \tilde{W} & 0 \\
    0 & \hat{W}
  \end{tbmatrix}$, one sees that $W \in \U(n_{1},n_{2})$ and $z = W w$.

  Now, pick $v = (\sqrt{c}, 0, \dots, 0) \in \mathcal{S}_{c}(c)$.
  For any $b \ge c$ there exists $t_{b} \ge 0$ such that $\cosh t_{b} = \sqrt{b/c}$ and $\sinh t_{b} = \sqrt{(b - c)/c}$.
  Therefore, by setting
  \begin{equation*}
    U_{b} \defeq
    \begin{bmatrix}
      \cosh t_{b} & 0 & \sinh t_{b} & 0\\
      0 & I_{n_{1}-1} & 0 & 0 \\
      \sinh t_{b} & 0 & \cosh t_{b} & 0 \\
      0 & 0 & 0 & I_{n_{2}-1} \\
    \end{bmatrix} \in \U(n_{1},n_{2}),
  \end{equation*}
  we have $w = U_{b} v$.
  As a result, any $z \in \mathcal{S}_{c}(b)$ is written as $z = W U_{b} v$ with $W U_{b} \in \U(n_{1},n_{2})$.
  Since $b \ge c$ is arbitrary, $\U(\mathcal{K}) \cong \U(n_{1},n_{2})$ acts transitively on the level set $R^{-1}(c)$ for any $c > 0$.

  One can argue similarly for $c < 0$ as well.
\end{proof}

\section*{Acknowledgments}
I would like to thank Paul Skerritt for his helpful comments and discussions on dual pairs.

\bibliography{Point_Vortices-Plane}
\bibliographystyle{plainnat}

\end{document}